\newcommandx{\td}[2][1=]{\todo[linecolor=green,
			backgroundcolor=green!10,bordercolor=green,#1]{TODO: #2}}
\newcommandx{\ms}[2][1=]{\todo[linecolor=blue,
			backgroundcolor=blue!10,bordercolor=blue,#1]{MS: #2}}
\newcommandx{\jy}[2][1=]{\todo[linecolor=red,
			backgroundcolor=red!10,bordercolor=red,#1]{JY: #2}}
\newtheorem{proposition}{Proposition}[section]
\newtheorem{lemma}{Lemma}[section]
\newtheorem{corollary}{Corollary}[section]
\newtheorem{theorem}{Theorem}[section]
\theoremstyle{definition}
\theoremstyle{remark}
\newif\ifarxiv
\newcommand{\customlabel}[2]{%
\protected@write \@auxout {}{\string \newlabel {#1}{{#2}{}}}}
\newif\ifdraft
\newcommand{\stps}{\texttt{STP}s\xspace}
\newcommand{\stp}{\texttt{STP}\xspace}
\newcommand{\st}{\texttt{ST}\xspace}
\newcommand{\bsn}{\texttt{BSN}\xspace}
\newcommand{\bsns}{\texttt{BSN}s\xspace}
\begin{document}
\date{}
\font\titlefont=ptmr at 16pt
\title{Budgeted Steiner Networks: Three Terminals with 
Equal Path Weights}
\author{Mario Szegedy \and Jingjin Yu 
}
\maketitle

\ifdraft
\begin{picture}(0,0)%
\put(-26,150){
\framebox(450,40){\parbox{430pt}{
\textcolor{blue}{
Use $\backslash$todo$\{$...$\}$ for general side comments
and $\backslash$jy$\{$...$\}$ for JJ's comments. Set 
$\backslash$drafttrue to $\backslash$draftfalse to remove the 
formatting. 
}}}}
\end{picture}
\vspace*{-5mm}
\fi

%
\begin{abstract}
Given a set of terminals in 2D/3D, the network with the shortest 
total length that connects all terminals is a Steiner tree. On 
the other hand, with enough budget, every terminal can be connected 
to every other terminals via a straight edge, yielding a complete 
graph over all terminals. 
In this work, we study a generalization of Steiner trees asking
what happens in between these two extremes.
Focusing on three terminals with equal pairwise path weights, we 
characterize the full evolutionary pathway between the Steiner 
tree and the complete graph, which contains intriguing 
intermediate structures. 
\end{abstract}

\section{Introduction}\label{sec:intro}
Consider a scenario in which three or more terminals (e.g., the black nodes $A, B,$ and $C$ in Fig.~\ref{fig:ex}) are to be connected using a (graph) network, the total length of which is limited. 
%
%
\begin{figure}[htp]
\begin{center}
\begin{overpic}[width=4.2in,tics=5]
{./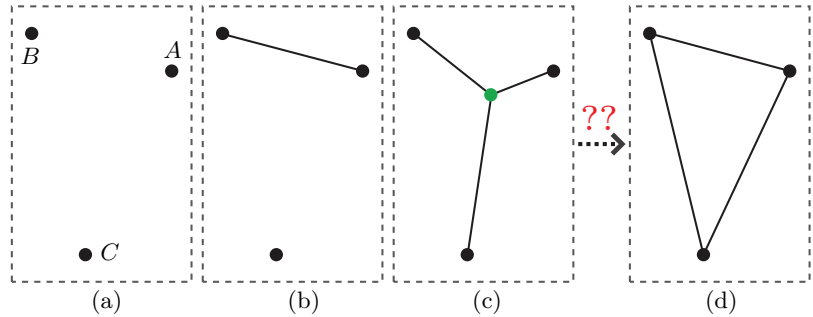}
\put(19,28.2){{\small $A$}}
\put(1.2,27.5){{\small $B$}}
\put(11.2,3){{\small $C$}}
\put(10,-3){{\small (a)}}
\put(34.5,-3){{\small (b)}}
\put(57.5,-3){{\small (c)}}
\put(86.5,-3){{\small (d)}}
\end{overpic}
\end{center}
\caption{Evolution of a \emph{budgeted Steiner network} over three (black) 
terminals as the budget increases. (a) Three terminals, $A, B$, and $C$, to be 
connected. (b) The minimal non-trivial network that connects two terminals. 
(c) The minimal network connecting all terminals, which is a Steiner tree. (d) With sufficient budget, the network is a complete graph. What happens between 
(c) and (d)?} 
\label{fig:ex}
\end{figure} 

At one extreme, the minimum length budget required to connect all terminals 
corresponds to the total length of the edges of a Steiner tree 
over the terminals (Fig.~\ref{fig:ex}(c)). 
The well-known \emph{Steiner tree problem} (\stp) seeks optimal network structures for 
connecting a set of terminals while minimizing the total edge lengths
\cite{winter1987steiner,hwang1992steiner}. \stp generally asks for a minimally 
connected network, resulting in a topology that is a tree.
At the other extreme, when there is no limit on the budget, the best network 
structure is clearly a complete graph over all terminals, where every pair of 
terminals are connected through a straight edge. Such a network ensures the 
shortest possible travel distance between any pair of terminals. 
What if, however, the budget falls between the two extremes? 

To address the question, we propose the \emph{budgeted Steiner network} (\bsn) 
problem/model.
As a natural generalization of \stp, \bsn seeks the best network structure 
for a given length budget to connect three or more terminals, which  
reside in $\mathbb R^d$ for some $d \ge 1$, such that the sum of the 
(weighted) distances between pairs of nodes are minimized. In this work, 
we mainly focus on the case of three terminals with $d = 2$ (for three 
terminals, $d = 2$ is the same as $d\ge 2$). 

The generalization immediately leads to rich and interesting structures, 
even when only three terminals are involved. 
As the budget increases, the network structure changes continuously 
between a Steiner tree and a complete graph over the terminals,
a few snapshots of which are illustrated in Fig.~\ref{fig:spectrum}. 
\begin{figure}[h]
\begin{center}
\begin{overpic}[width=4.5in,tics=5]
{./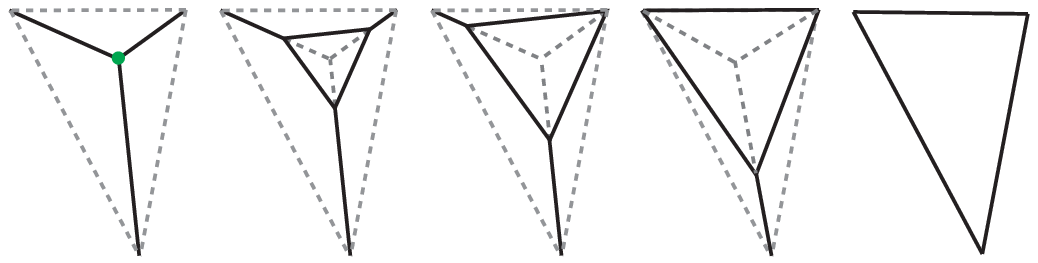}
\put(17.5, 25){{\small $A$}}
\put(-0.5,25){{\small $B$}}
\put(15.5,0){{\small $C$}}
\put(34.5, 23.5){{\small $A'$}}
\put(25,18){{\small $B'$}}
\put(33.5,14){{\small $C'$}}
\put(18,10){{$\to$}}
\put(39,10){{$\to$}}
\put(60,10){{$\to$}}
\put(81,10){{$\to$}}
\put(12,-4){{\small (a)}}
\put(32,-4){{\small (b)}}
\put(52,-4){{\small (c)}}
\put(72,-4){{\small (d)}}
\put(92,-4){{\small (e)}}
\end{overpic}
\end{center}
\caption{A spectrum of optimal Euclidean \bsn network structures (solid lines) 
for three terminals in a typical setup, as the allowed budget increases.}
\label{fig:spectrum}
\end{figure} 

As a summary of the full evolutionary pathway, if all internal angles of a 
$\triangle ABC$ are smaller than $2\pi/3$, the Steiner tree over terminals 
$A, B$, and $C$ has a Steiner point that is internal to the 
triangle (e.g., the green dot in 
Fig.~\ref{fig:spectrum}). 
In this case, for a generic $\triangle ABC$ (that is, $\triangle ABC$ is 
not an isosceles triangle), as the budget increases past the length of 
the Steiner tree, an equilateral triangle $\triangle A'B'C'$ will ``grow'' 
out of the Steiner point (Fig.~\ref{fig:spectrum}(b)) and continues to 
expand until one vertex of $\triangle A'B'C'$ meets one of the terminals,
say $A$.  
Past this point, $\triangle A'B'C'$ continues to expand as an isosceles 
triangle with $A' = A$ fixed (Fig.~\ref{fig:spectrum}(c)) as the budget 
continues to increase, until another vertex meets $B$ or $C$, say $B$. 
$\triangle A'B'C'$ then continue to expand with $A'=A$ and $B'=B$ fixed,
and $C'$ moving toward $C$, until it fully coincides with $\triangle ABC$. 
If $\triangle ABC$ has one angle equal to or larger than $2\pi/3$, the 
evolutionary pathway is similar but shortened; the corresponding \bsn 
does not have an initial phase containing an equilateral triangle. 

The main contribution of this work is the rigorous characterization 
of the precise evolution pathway of a \bsn as the available budget 
increases, for three arbitrarily located terminals. The analysis also 
implies an efficient algorithm for computing the optimal \bsn structure 
for any given budget.  
%
%
%

As a combination of multi-terminal shortest path problems 
and constrained structural optimization problems, \bsns have many 
potential real-world applications.
As an example, consider a scenario in which several islands that 
are only connected through ships or airplanes. To facilitate commerce 
activities among the islands, road bridges are to be built with a 
given bridge length budget.
The minimum budget for connecting all islands will yield a Steiner
tree. Below this budget, not all islands are connected. As budget 
increases, travel time between islands can be improved. For a given 
budget, having algorithms for \bsn can help design the best network 
structure, maximizing the utility of the fixed budget. 
On the other hand, it is rarely the case that sufficient budget will 
be provided so that the bridges will form a complete graph. 
Similar scenarios appear in constructing intra- and inter-city highways, 
balancing warehouse space allocations for storage and transportation, 
and designing walkways connecting buildings on a college campus or road
networks for moving components using autonomous vehicles in a large factory.

\section{Related Work}\label{sec:related} 
\bsn problems are closely related to \stps 
\cite{winter1987steiner,hwang1992steiner,hauptmann2013compendium}, 
which is a broad term covering a class of network optimization problems. 
%
An \stp seeks a minimal network that connects a set of terminals (in 
Euclidean space or on graphs that are possibly edge/vertex weighted). 
There are four main cases: Euclidean, rectilinear, discrete/graph-theoretic
\cite{cook1971complexity,karp1972reducibility}, 
and phylogenetic \cite{hwang1992steiner}. Considering the paper's scope, 
we provide a brief literature review of Euclidean \stps. 

The Euclidean \stp asks the following 
question: given $n$ terminals in 2D or 3D, find a network that connects
all $n$ points with the minimum total length (the discussion from now 
on will be limited to the 2D case). Obviously, the resulting network is a 
tree and may only have straight line segments; it may also require 
additional intermediary nodes to be added. These added nodes are called 
{\em Steiner points}. 
The study of Euclidean \stp bears with it a long history; the initial 
mathematical study of the subject may be traced back to at least 1811 
\cite{brazil2014history}. 
According to \cite{korte2001vojtvech}, key properties of Euclidean \stp 
have been established in (as early as) the 1930s by Jarn{\'\i}k 
and K{\"o}ssler \cite{jarnik1934minimalnich}. An interconnecting network 
$T$ is called a Steiner tree if it satisfies the following conditions 
\cite{hwang1992steiner}:
\begin{itemize}
\item[{(a)}] $T$ is a tree,
\item[{(b)}] Any two edges of $T$ meet at an angle of at least $2\pi/3$, and
\item[{(c)}] Any Steiner point cannot be of degree 1 or 2. 
\end{itemize}

These conditions turn out to be also relevant in our study of the \bsn problem.
The solution to an Euclidean \stp must be a Steiner tree. Note that {(b)} implies 
a node of the network has a maximum degree of 3. 
Together, {(b)} and {(c)} imply that three edges must meet at a Steiner point 
forming angles of $2\pi/3$ in a pairwise manner (see Fig.~\ref{fig:ex} and 
Fig.~\ref{fig:st-ex}(a)). 
Because Euclidean Steiner trees assume minimal energy configurations, they also 
appear in nature (Fig.~\ref{fig:st-ex}(b)); in fact, it is possible to employ 
related natural phenomena (e.g., using rubber bands and soap film) to ``compute'' 
Euclidean Steiner trees \cite{miehle1958link,clark1981communication,gilbert1968steiner}.
%

\begin{figure}[htp]
\begin{center}
\vspace*{3mm}
\begin{overpic}[width=3.4in,tics=5]
{./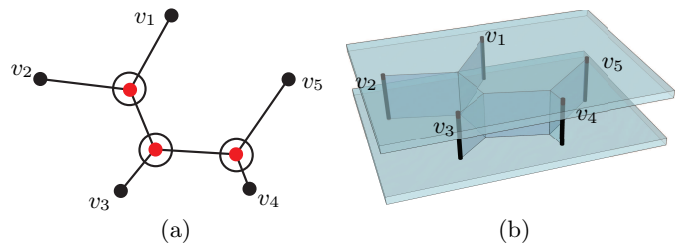}
\put(16,30){{\small $v_1$}}
\put(-3,22){{\small $v_2$}}
\put(9,1.5){{\small $v_3$}}
\put(35,2){{\small $v_4$}}
\put(41.8,20){{\small $v_5$}}
\put(70,27){{\small $v_1$}}
\put(50,20){{\small $v_2$}}
\put(62,13){{\small $v_3$}}
\put(84,15){{\small $v_4$}}
\put(88,23){{\small $v_5$}}
\put(20,-3){{\small (a)}}
\put(72,-3){{\small (b)}}
\end{overpic}
\end{center}
\caption{An Euclidean Steiner tree and a real-world realization. (a) A 
Steiner tree for five fixed terminals $v_1$--$v_5$ (black nodes). The 
solution contains three Steiner points (red nodes). The lines meet at 
these points form $2\pi/3$ angles. (b) The illustration of a physical 
experiment. If 5 pegs are sandwiched between two glass panels and the 
setup is immersed into and then lifted from soapy water, the soap film 
will naturally form a Steiner tree that matches the one in (a).}
\label{fig:st-ex}
\end{figure} 

Computing an Euclidean \stp is NP-hard, although there is a polynomial time 
approximation scheme (PTAS) for solving it \cite{Aro98}. On the more practical 
side, fast methods including the GeoSteiner algorithm \cite{winter1997euclidean,
warme2000exact} have been developed building on the Melzak construction 
\cite{melzak1961problem}. An open source implementation of the GeoSteiner 
algorithm is maintained by the authors \cite{geoSteiner}.

%

\section{Preliminaries}\label{sec:problem} 
Let there be $n \ge 3$ \emph{terminals} $N = \{v_1, \ldots, v_n\}$, 
distributed in some way in a $d$-dimensional unit cube, $d > 0$. 
For each pair of terminals $v_i$ and $v_j$, $1 \le i < j \le n$, let 
$w_{ij} \in (0, 1]$ denote the (relative) \emph{weight} or \emph{importance} 
of the route connecting $v_i$ to $v_j$. In practice, $w_{ij}$ may model the 
expected traffic flow from $v_i$ to $v_j$, for example. 
In an Euclidean \emph{budgeted Steiner tree} (\bsn) problem, straight line 
segments are to be added for connecting the $n$ terminals so that some or 
all of the terminals are connected. 
Similar to Steiner trees, intermediate nodes other than $v_1, \ldots, 
v_n$, which we call \emph{anchors}, may be added. 
The terminals, anchors, and the straight line segments then form a graph
containing one or more connected components. 
Under the constraint that the total length of the line segments does not 
exceed a {\em budget} $L$, the \bsn problem seeks a network structure 
that minimizes the objective
\begin{align}\label{eq:obj}
J(L) = \sum_{1 \le i < j \le n} w_{ij}d_{ij},
\end{align}
in which $d_{ij}$ denotes the shortest distance between $v_i$ and $v_j$ 
on the network. If no path exists between $v_i$ and $v_j$, let $d_{ij}$ 
be some very large number. 

In the current work, we examine the case of $n = 3$ and $w_{ij} = 1$ for 
all $1 \le i,j \le 3$, $i \ne j$, i.e., paths between pairs terminals are equally important. 
Let the three terminals be $A, B$, and $C$, we are looking for a \bsn 
minimizing the sum $d_{AB} + d_{BC} + d_{AC}$ subject to the budget $L$.
%
%
For a fixed $L$, let $N(L)$ denote the optimal \bsn structure.  
Let $L_{\st}$ be the budget $L$ when $N(L)$ is a Steiner tree. 
For convenience, let $N_{\st} := N(L_{\st})$.

\section{Anchor Structures and Steiner Triangles}\label{sec:anchors}
\subsection{Basic Properties of Anchors}
We begin with analyzing what happens when $L = L_{\st} + \varepsilon$ 
for small $\varepsilon > 0$, for the case where the Steiner point lies inside 
$\triangle ABC$, which happens when all angles of $\triangle ABC$ are smaller 
than $2\pi/3$. 
Due to continuity, the resulting structure that minimizes 
Eq.~\eqref{eq:obj} must be a perturbation of $N_{\st}$ (e.g., 
Fig.~\ref{fig:ex}(b)). 
This means that $N(L_{\st} + \varepsilon)$ must start ``growing'' 
at the Steiner point. 
We want to understand how $N(L_{\st} + \varepsilon)$ evolves for 
small $\varepsilon$. 
This raises the following questions: (1) how many line segments are in 
$N(L = L_{\st} + \varepsilon)$ and (2) how do they come together? 
We note that $N(L_{\st} + \varepsilon)$ must contain more than 
three straight line segments. Otherwise, $N(L_{\st} + \varepsilon)$
will still be a tree but with $d_{AB} + d_{BC} + d_{AC} = 
2(L_{\st} + \varepsilon) > 2L_{\st}$, i.e., $J(L) > J(L_{\st})$.
%
%
%

To answer above-mentioned questions, we start with establishing essential 
properties of anchors, concerning their locations, degrees, and numbers. 
It is clear that anchors must always fall within $\triangle ABC$;
otherwise, an outside anchor (on the convex hull of all terminals 
and anchors) can be ``retracted'' toward the boundary of $\triangle 
ABC$ to reduce both the budget and the objective function value. 
In fact, anchors cannot reside on the boundary of $\triangle ABC$, 
as shown in the following lemma. 

%

\begin{lemma}[Interiority of Anchors]\label{l:anchor-int}
For three terminals $A, B$, and $C$, any anchor must fall in the interior 
of $\triangle ABC$, excluding its perimeter. \end{lemma}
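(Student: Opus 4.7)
The plan is to argue by contradiction: suppose an anchor $p$ of the optimal network $N(L)$ lies on the perimeter of $\triangle ABC$. Since an anchor coinciding with a terminal can be absorbed into that terminal without affecting $L$ or $J$, I may assume $p$ lies in the relative interior of one of the sides, WLOG side $AB$. I set up coordinates so that $AB$ is the $x$-axis and $C$ sits in the upper half-plane; the convex-hull retraction argument preceding the lemma already guarantees that every other node of $N(L)$ has $y$-coordinate $\ge 0$.

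First I would perform a first-order perturbation: push $p$ by $\delta > 0$ in the $+y$ direction. A direct calculation yields, for each neighbor $q_k$ of $p$,
\begin{equation*}
|p'q_k| - |pq_k| \;=\; -\delta\,\frac{y_{q_k}}{|pq_k|} + O(\delta^2),
\end{equation*}
which is $\le 0$, and strictly negative whenever $y_{q_k}>0$. Since any graph path is a concatenation of edges and only the edges at $p$ are affected, each shortest-path length $d_{ij}$ weakly decreases to first order, so $J$ weakly decreases; the total length $L$ also weakly decreases, strictly so whenever at least one neighbor of $p$ lies off line $AB$.

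The argument then splits according to the local structure at $p$. If some neighbor $q_k$ satisfies $y_{q_k}>0$ \emph{and} the edge $pq_k$ is used by at least one shortest path realizing some $d_{ij}$, then that $d_{ij}$ (and hence $J$) strictly decreases under the perturbation, contradicting optimality. Otherwise every edge from $p$ into the open upper half-plane carries no shortest path; such edges are deletable---their removal leaves every $d_{ij}$ unchanged while strictly shrinking $L$. After pruning, all remaining neighbors of $p$ lie on line $AB$, so $p$ is a collinear subdivision point whose edges can be replaced by direct segments between its neighbors without changing any $d_{ij}$ or increasing $L$; thus $p$ was removable after all.

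The main obstacle I expect is converting the ``same $J$, strictly smaller $L$'' outcomes above into a genuine strict improvement of the objective. For this I would invoke (and separately justify) the strict monotonicity of $J(L)$ on the interval between $L_{\st}$ and the total length of the complete graph on $\{A,B,C\}$: any slack in the budget can be spent, for instance by slightly lengthening an interior edge toward a terminal, to shorten at least one pairwise shortest path and strictly reduce $J$. Combined with the pruning and subdivision-removal steps, this shows a boundary anchor $p$ cannot appear in any optimal \bsn, completing the proof.
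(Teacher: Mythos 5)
Your core argument is sound and belongs to the same family as the paper's proof: perturb the boundary anchor so that off-boundary edges shorten at first order while the penalty along the boundary line is only second order, so both the budget and the objective strictly decrease for a small displacement, contradicting optimality. The execution differs, though. The paper retracts the anchor $D$ along its existing interior edge $DD'$ and computes $\Delta L = |DD''| + (|AC|-|AD''|-|CD''|)$ and $\Delta J = 2|DD''| + (|AC|-|AD''|-|CD''|)$ explicitly, using that $DD'$ carries the $A$--$B$ and $C$--$B$ shortest paths and that the detour of the $A$--$C$ path through $D''$ is second order; you push the anchor perpendicular to the side and organize the proof as a case split on whether some off-boundary edge at $p$ is used by a shortest path. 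Your Case 1 subsumes the configuration the paper actually treats, and is arguably more general; your Case 2 (no shortest-path-carrying edge leaves the line $AB$) is a degenerate situation the paper silently excludes by assuming $DD'$ lies on shortest paths, and handling it costs you an extra ingredient the paper never needs: strict monotonicity of $J(L)$ below the complete-graph budget, which you invoke but do not prove.

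Two soft spots to fix. First, your monotonicity sketch (``slightly lengthening an interior edge toward a terminal'') is not an argument as stated; a clean justification is that if some $d_{ij}$ exceeds $|v_iv_j|$, its shortest path has a bend of interior angle $\theta<\pi$, and spending budget $2t\sin(\theta/2)$ on a chord cutting that corner shortens the path by $2t\bigl(1-\sin(\theta/2)\bigr)>0$; alternatively you can avoid monotonicity entirely by normalizing optimal networks to contain no edges unused by shortest paths and no collinear degree-two vertices, which eliminates Case 2 by definition of an anchor. Second, your displayed estimate is wrong for a neighbor $q_k$ on line $AB$: there the change is $+O(\delta^2)$, not $\le 0$, so the blanket claims that each $d_{ij}$ and $L$ ``weakly decrease'' should be stated only to first order (or restricted to the case where some neighbor lies strictly off the line); this is harmless for your Case 1 conclusion, since the strict first-order gain dominates the second-order losses, but the write-up should say so.
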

\begin{proof}
Consider the setting illustrated in Fig.~\ref{fig:ai} where only a portion 
of $\triangle ABC$ is drwan. $A$ and $C$ are terminals. Suppose that $D$ is an 
anchor on $AC$ and the horizontal line segment passing through $D$ and $D'$ 
is part of an optimal network structure. For the setup, $DD'$ must be 
part of the shortest path on the optimal network that connects $A$ to $B$ 
as well as $C$ to $B$; the entire $AC$ must also be part of the network that
connects $A$ and $C$. 
\begin{figure}[h]
\begin{center}
\vspace{1mm}
\begin{overpic}[width=1.59in,tics=5]
{./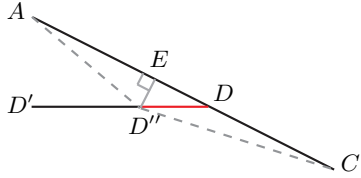}
\put(-8,50){{\small $A$}}
\put(102,0){{\small $C$}}
\put(60,23){{\small $D$}}
\put(32,12){{\small $D''$}}
\put(-8,19){{\small $D'$}}
\put(39,34){{\small $E$}}
\end{overpic}
\end{center}
\caption{Moving $C'$ along $C'C$ for a small amount.}
\label{fig:ai}
\end{figure} 

We claim that such a configuration cannot be optimal. To see this, 
retract $D$ along $DD'$ by some small distance of $|DD''|$. This reduces 
the budget by 
$$
\Delta L = |DD''| + (|AC| - |AD''| - |CD''|).
$$

At the same time, the cost reduction is 

$$
\Delta J = 2|DD''| + (|AC| - |AD''| - |CD''|).
$$ 

Let $E \in AC$ be a point such that $D''E \perp AC$. It is 
straightforward to derive that $|ED''| \gg |CD''| - |CE|$ and 
$|ED''| \gg |AD''| - |AE|$ for sufficiently small $|ED''| > 0$. 
Therefore, $|DD''| \ge |ED''| > (|AD''| + |CD''| - |AC|)$. 
This means that for small $|DD''|$, both $\Delta L$ and $\Delta J$ are
positive, i.e., we can reduce budget and at the same time reduce the 
cost by retracting $D$ along $DD'$ to $D''$. This means that $D$ cannot 
be an anchor. 
\end{proof}

Similar to Steiner points, each anchor must have degree exactly three.

\begin{lemma}[Degree of Anchors]\label{l:anchor-degree}
For three terminals, any anchor must have degree three. 
\end{lemma}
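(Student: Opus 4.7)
The plan is to rule out every degree other than three in turn. A degree-one anchor $D$ has a single incident edge whose far endpoint is some vertex $N$; since an interior vertex of a simple path has path-degree at least two, this edge lies on no shortest terminal-pair path and can be deleted, strictly saving budget and leaving every $d_{ij}$ unchanged, contradicting optimality. For a degree-two anchor $D$ with neighbors $X, Y$: if $\angle XDY < \pi$, replacing $\{DX, DY\}$ by the straight segment $XY$ strictly saves budget by $|DX|+|DY|-|XY|>0$ and weakly shortens every shortest path that crossed $D$ (no shortest path can terminate at an anchor, since anchors are not terminals); if $\angle XDY = \pi$, then $D$ is merely a collinear point on $XY$ and carries no branching information, hence is not a meaningful anchor.

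For the high-degree case, suppose for contradiction that $D$ has degree $k \geq 4$ with neighbors $N_1, \ldots, N_k$ sorted cyclically around $D$. The $k$ consecutive angles at $D$ sum to $2\pi$, so the smallest, say $\angle N_1 D N_2$, is at most $2\pi/k \leq \pi/2 < 2\pi/3$. I would then perform a Steiner (Fermat) refinement on the triple $\{D, N_1, N_2\}$: delete the two edges $DN_1, DN_2$ and insert a new intermediate point $S$ together with the three edges $SD, SN_1, SN_2$ meeting at $2\pi/3$ angles. Because $\angle N_1 D N_2 < 2\pi/3$, the Torricelli construction strictly shortens the incident length, $|SD|+|SN_1|+|SN_2|<|DN_1|+|DN_2|$, yielding a budget saving $\Delta L>0$. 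A separate law-of-cosines check handles the degenerate subcase in which a second angle of the auxiliary triangle $DN_1N_2$ is itself at least $2\pi/3$ and the Fermat tree collapses to a two-edge polyline $D$--$N_i$--$N_{3-i}$.

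The delicate part is that this refinement can strictly lengthen any shortest terminal-pair path that crossed $D$ using exactly one of the two deleted edges, even though it shortens any path that crossed $D$ using both. To close the argument I would use the saved $\Delta L$ as slack: whichever of $DN_1, DN_2$ is still required by a shortest path can be reinserted on top of the Steiner structure, and since $\Delta L$ and the reinsertion cost are of the same first-order magnitude, the bookkeeping produces a feasible network with total length strictly below $L$ and $J$ weakly smaller, contradicting optimality. If the reinsertion bookkeeping proves awkward, my fallback is an infinitesimal split: bifurcate $D$ into two nearby vertices $D_1, D_2$ joined by a segment of length $\varepsilon$, partitioning the $k$ neighbors by sign of projection onto a well-chosen direction $\hat v$; the identity $\frac{1}{2\pi}\int_0^{2\pi}\sum_i |\hat u_i \cdot \hat v|\,d\hat v = 2k/\pi > 2$ for $k \geq 4$ furnishes a $\hat v$ for which the split strictly decreases the total length to first order, and a parallel variational inequality for each $d_{ij}$ finishes the contradiction.
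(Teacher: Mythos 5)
Your degree-one and degree-two cases are fine and are essentially the paper's own (terse) reasoning. The genuine gap is in the $k\ge 4$ case, and it is the same gap in both of your routes: your surgeries are designed to reduce total length $L$, but the quantity you must also control is $J$, the sum of shortest-path distances, and the quantitative claims you invoke to repair the damage to $J$ are false. For the Fermat refinement: if some shortest path uses exactly one of the deleted edges, say $DN_1$, and you reinsert $DN_1$, the new local length is $|SD|+|SN_1|+|SN_2|+|DN_1|$, which exceeds the original $|DN_1|+|DN_2|$ by $|SD|+|SN_1|+|SN_2|-|DN_2|\ \ge\ |SN_1|>0$ (triangle inequality on $D,S,N_2$). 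So the reinsertion \emph{always} costs strictly more than the saving $\Delta L$; the assertion that ``$\Delta L$ and the reinsertion cost are of the same first-order magnitude'' and that the bookkeeping yields a network of length strictly below $L$ is simply wrong whenever a reinsertion is needed, and when no reinsertion is needed you have not shown that no $d_{ij}$ increased. For the splitting fallback: the averaging identity does produce a direction $\hat v$ along which $L$ decreases to first order, but the ``parallel variational inequality for each $d_{ij}$'' fails. A shortest path that enters on one side of the split and leaves on the other pays the full bridge length $\varepsilon$ while the endpoint displacements recoup at most $\tfrac{\varepsilon}{2}(|\hat u_i\cdot\hat v|+|\hat u_j\cdot\hat v|)\le\varepsilon$, with equality only if both incident edges are parallel to $\hat v$; so such a path generically \emph{lengthens} to first order, $J$ can go up, and nothing in your choice of $\hat v$ (made purely from the length functional) prevents a crossing path from existing. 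In short, neither route closes the contradiction.

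Note also that your high-degree argument never uses the hypothesis of three terminals, whereas the paper's proof of the upper bound is exactly a three-terminal counting argument: in an optimal network every edge leaving an anchor must lie on a shortest path from that anchor to some terminal, two distinct edges leading to the same terminal make one of them redundant (its traffic can be rerouted at equal distance and the spare branch pruned, strictly saving budget without increasing any $d_{ij}$), and since there are only three terminals the degree is at most three. That pruning perturbation is the kind you need: one that never increases any pairwise distance, rather than one that merely shortens the network. If you want to salvage a local-surgery proof for $k\ge4$, you must build that monotonicity in from the start; as written, both the Fermat-refinement bookkeeping and the vertex-splitting fallback leave a genuine gap.
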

\begin{proof}
Each anchor must connect at least three line segments; otherwise, the 
anchor point and the involved line segments only cause increases to the 
objective $d_{AB} + d_{BC} + d_{AC}$. An anchor's degree also cannot 
be four or larger when there are only three terminals, because each outgoing 
edge from an anchor must be on a shortest path to a unique terminal, if we 
are to minimize Eq.~\eqref{eq:obj}. But there are only three terminals.
\end{proof}

In general, anchors have degree three even when there are more 
than three terminals. Building on Lemmas~\ref{l:anchor-int}
and~\ref{l:anchor-degree}, we continue to show that there can be at 
most three anchors for three terminals. 

\begin{lemma}[Number of Anchors in $N(L_{\st} + \varepsilon)$]
\label{l:anchor-number-3-isp}
When all angles of $\triangle ABC$ are below $2\pi/3$, for small $\varepsilon 
> 0$, $N(L_{\st} + \varepsilon)$ contains three anchors that forms a triangle 
inside $\triangle ABC$. 
\end{lemma}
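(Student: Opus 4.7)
My plan is to combine the already-established facts---Lemma~\ref{l:anchor-int} (anchors strictly interior), Lemma~\ref{l:anchor-degree} (each anchor has degree 3), and the observation preceding the lemma that $N(L_{\st}+\varepsilon)$ has more than three edges---with a continuity argument and a degree count, and then rule out all combinatorial alternatives besides the triangle. I would first establish continuity: since the total length of $N(L_{\st}+\varepsilon)$ is at most $L_{\st}+\varepsilon$ and every vertex lies in $\triangle ABC$, any sequence $N(L_{\st}+\varepsilon_n)$ with $\varepsilon_n\to 0^+$ admits a Hausdorff-convergent subsequence whose limit is a connected network on $\{A,B,C\}$ of length exactly $L_{\st}$, hence $N_{\st}$ by uniqueness (assumed for a generic triangle). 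Consequently, for $\varepsilon$ small, every anchor sits in an arbitrarily small neighborhood of the Steiner point $S$, and every terminal has degree one with its single incident edge a small perturbation of the corresponding Steiner edge toward $S$.

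A degree count then lower-bounds $k$, the number of anchors. The graph is not a tree, so it contains at least one independent cycle. With $k$ interior degree-$3$ nodes and three degree-$1$ terminals, $3k+3=2E$ gives $E=(3k+3)/2$ (so $k$ is odd), and the cyclomatic number $E-V+1=(k-1)/2\ge 1$ forces $k\ge 3$.

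The main obstacle is ruling out $k\ge 5$. Here I would argue by a leading-order comparison in $\varepsilon$: the explicit $k=3$ construction---three anchors symmetrically placed about $S$ as the vertices of a small equilateral triangle, each connected to the nearest terminal along the original Steiner direction---achieves $J_3(L_{\st}+\varepsilon) = 2L_{\st}-\tfrac{\sqrt{3}-1}{2}\,\varepsilon + o(\varepsilon)$. For any fixed topology with $k\ge 5$, the extra internal anchors absorb length into edges that do not simultaneously shorten all three pairwise distances $d_{AB},d_{BC},d_{AC}$, and a direct computation (or a local variational argument based on the degree-$3$ angle conditions at anchors) shows that the corresponding best $J_k$ has a strictly smaller $|\varepsilon|$-coefficient, so $J_k > J_3$ for all sufficiently small $\varepsilon$.

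Once $k=3$ is fixed, the conclusion is immediate. Each terminal uses one of the six edges as its degree-$1$ spoke, leaving three edges among the three anchors; the only simple graph on three vertices with three edges is $K_3$, a triangle. By Lemma~\ref{l:anchor-int}, this triangle lies strictly inside $\triangle ABC$, which is exactly the claim.
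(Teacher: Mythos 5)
Your lower bound is fine and is arguably cleaner than the paper's: from Lemma~\ref{l:anchor-degree}, degree-one terminals (justified by your compactness/continuity step, which is reasonable since the Steiner tree of a triangle with all angles below $2\pi/3$ is unique --- no genericity assumption is needed), and the fact that the optimal network is not a tree, the count $E=(3k+3)/2$ and cyclomatic number $(k-1)/2\ge 1$ do give $k\ge 3$ with $k$ odd. The paper instead disposes of the one- and two-anchor cases by ad hoc arguments (a non-minimal tree is suboptimal; two anchors force a doubled terminal connection whose length cannot be within $\varepsilon$ of $L_{\st}$), so on this half your route is a genuinely different and arguably tighter bookkeeping.

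The genuine gap is the upper bound. Ruling out $k\ge 5$ is the hard content of the lemma, and your proposal only asserts it: ``the extra internal anchors absorb length into edges that do not simultaneously shorten all three pairwise distances, and a direct computation \dots shows that the corresponding best $J_k$ has a strictly smaller $|\varepsilon|$-coefficient.'' No such computation is carried out, no enumeration of the possible degree-$3$ topologies with $k=5$ (or more) anchors is given, and it is not even argued that the number of candidate topologies is finite and can be compared term by term; as stated this is a restatement of what must be proved rather than a proof. The paper closes this step with a different, combinatorial argument: in an optimal network each of the three inter-terminal shortest paths can bend at most twice, because a bend at an interior anchor is only profitable where the path shares an edge with a shortest path to the third terminal, and there are only two other terminals; since every anchor is a bend shared by exactly two of the three paths, there are at most $3\cdot 2/2=3$ anchors. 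You would need either to supply that kind of path-sharing argument or to actually perform the first-order comparison you invoke (which requires controlling all $k\ge 5$ topologies uniformly) before your proof is complete. The final step ($k=3$ forces $K_3$ on the anchors, interior by Lemma~\ref{l:anchor-int}) is fine.
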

\begin{proof}
By Lemma~\ref{l:anchor-degree}, all anchors have degree three. 
If there is only a single anchor that is not the Steiner point, then 
$N(L_{\st} + \varepsilon)$ still has a tree structure. This tree is different 
from $N_{\st}$ which is minimal, so the new tree must have a larger objective 
function value which cannot be optimal. 

If there are two anchors, each with degree three, then both of them 
cannot be connected to all of $A, B$, and $C$; there must be 
exactly five line segments in $N(L_{\st} + \varepsilon)$, one of which 
connecting the two anchors. This leaves four line segments connected 
to the three terminals, which means that two of these line segments
must reach the same terminal. This will induce a total budget that 
cannot be an arbitrarily small amount above $L_{\st}$ when the Steiner
point is inside $\triangle ABC$. That is, this is impossible with 
a budget $L_{\st} + \varepsilon$ for small $\varepsilon > 0$.

There cannot be more than three anchors when there are only three terminals. 
To establish this, we note that a shortest path between any two terminals, 
when there are three terminals in total, can make at most two ``turns'' due 
to path sharing. To see this, consider the shortest path $P_{AB}$ between 
terminals $A$ and $B$. $P_{AB}$ may bend at most two times, once to share with 
a path from $A$ to $C$ and once to share with a path from $B$ to $C$. 
If $P_{AB}$ bends once, say at an anchor $A'$, then both $AA'$ or $A'B$ must 
be on a shortest path to $B$ and we must have a tree. This is not possible 
under the assumption that $\varepsilon$ is small, so there can only be one 
edge coming out of a terminal. Therefore, each shortest path between two 
terminals must bend exactly twice at two anchors. 
The thee shortest paths then have a total of six anchors. 
Because each anchor is shared by two shortest paths, there can only be 
three unique anchors that form a triangle. 
\end{proof}

\subsection{Steiner Triangle for Three Anchors}
Having shown that there are three anchors, let the anchor closest
to $A, B$ and $C$ be $A', B'$, and $C'$, respectively. This suggest that 
$N(L_{\st} + \varepsilon)$ contains six line segments $AA'$, $BB'$, $CC'$, 
$A'B'$, $A'C'$, and $B'C'$. 
We call $\triangle A'B'C'$ that ``grows'' out of the Steiner point a 
\emph{Steiner triangle}. 
Next, we establish that $\triangle A'B'C'$ is an equilateral triangle, 
starting with showing that its three internal angles are bisected by 
$AA'$, $BB'$ and $CC'$. 
The objective Eq.~\eqref{eq:obj}, $d_{AB} + d_{BC} + d_{AC}$ for the current
setting, translates to 
\begin{align}\label{ex:obj2}
J(L_{\st} + \varepsilon) = 2|AA'| + 2|BB'| + 2|CC'| + |A'B'| + |A'C'| + |B'C'|.
\end{align}

\begin{lemma}[Bisector of Steiner Triangle]
\label{l:bisector}
For three terminals $A, B$, and $C$ with a Steiner point, let 
$N(L_{\st} + \varepsilon)$ be composed of the Steiner triangle 
$\triangle A'B'C'$ and segments $AA'$, $BB'$ and $CC'$. Then an angle of 
$\triangle A'B'C'$ is bisected by the line passing the corresponding anchor 
and the terminal the anchor is connected to. 
\end{lemma}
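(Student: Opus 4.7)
The plan is to apply a first-order optimality (Lagrange multiplier) argument at each anchor. Fix the terminals and the other two anchors, treat the position of $A'$ as a 2D variable, and impose the stationary condition $\nabla_{A'} J = \lambda\,\nabla_{A'} L$ for some scalar $\lambda$. Since by Lemma~\ref{l:anchor-int} the anchor lies strictly in the interior of $\triangle ABC$, the optimum is indeed an interior critical point of the constrained problem, so the method applies.

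To evaluate the gradients, recall the elementary fact that for any fixed point $X$, $\nabla_{A'}|XA'|=\hat{u}_{XA'}$, the unit vector from $X$ to $A'$. Applying this to \eqref{ex:obj2} and to the corresponding budget expression, only the terms $2|AA'|+|A'B'|+|A'C'|$ and $|AA'|+|A'B'|+|A'C'|$ respectively depend on the position of $A'$, so
\[
\nabla_{A'}J=2\hat{u}_{AA'}+\hat{u}_{B'A'}+\hat{u}_{C'A'},\qquad \nabla_{A'}L=\hat{u}_{AA'}+\hat{u}_{B'A'}+\hat{u}_{C'A'}.
\]
Substituting into $\nabla_{A'}J=\lambda\,\nabla_{A'}L$ and collecting terms yields $(2-\lambda)\,\hat{u}_{AA'}=(\lambda-1)(\hat{u}_{B'A'}+\hat{u}_{C'A'})$, so $\hat{u}_{AA'}$ is collinear with $\hat{u}_{B'A'}+\hat{u}_{C'A'}$.

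Now $\hat{u}_{B'A'}$ and $\hat{u}_{C'A'}$ are unit vectors based at $A'$ pointing along the two rays opposite to the triangle sides $A'B'$ and $A'C'$; being unit vectors of equal length, their sum lies along the line that bisects the angle they form at $A'$. That line is precisely the angle-bisector line of $\angle B'A'C'$ (it bisects both the interior angle and its vertical angle). Therefore the segment $AA'$ sits on this bisector line, which is the claim. The identical argument with $A'$ replaced by $B'$ or $C'$ handles $BB'$ and $CC'$.

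The main obstacle is excluding the degenerate values of $\lambda$ and verifying constraint qualification. If $\lambda=1$ then $\hat{u}_{AA'}=0$, impossible since $A\neq A'$; if $\lambda=2$ then $\hat{u}_{B'A'}+\hat{u}_{C'A'}=0$, forcing $A'$ onto the segment $B'C'$ and collapsing the Steiner triangle, contradicting Lemma~\ref{l:anchor-number-3-isp}. Likewise, $\nabla_{A'}L=0$ would require the three unit vectors at $A'$ to meet pairwise at $120^\circ$, which is the Steiner-point configuration and is ruled out as long as $\triangle A'B'C'$ has positive size (i.e., $\varepsilon>0$). With these cases dispatched, the collinearity equation yields the bisector property cleanly.
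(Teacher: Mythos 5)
Your route is genuinely different from the paper's. The paper argues by an explicit geometric perturbation: it moves one anchor $C'$ perpendicular to $C'C$, computes the first-order changes of the budget and of the objective (which coincide, namely $|C'E|(\sin\angle B'C'D-\sin\angle A'C'D)$), and concludes that unless the two angles against the extension of $CC'$ are equal one could decrease budget and objective simultaneously, contradicting optimality. Your KKT/stationarity formulation packages the same first-order information as $\nabla_{A'}J=\lambda\,\nabla_{A'}L$ and reads the bisector property off the collinearity of $\hat{u}_{AA'}$ with $\hat{u}_{B'A'}+\hat{u}_{C'A'}$ (sum of two unit vectors bisects their angle). This is arguably cleaner than the paper's ``second-order approximation'' bookkeeping, treats all three anchors uniformly, and makes explicit where differentiability and interiority (Lemma~\ref{l:anchor-int}) are used; the paper's version, by contrast, needs no multiplier formalism and directly exhibits the improving perturbation.

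One step is justified incorrectly, though it is easily repaired: your exclusion of the degenerate case $\nabla_{A'}L=0$. The condition $\hat{u}_{AA'}+\hat{u}_{B'A'}+\hat{u}_{C'A'}=0$ says that $A'$ sees $A$, $B'$, $C'$ at pairwise $2\pi/3$ angles, i.e.\ $A'$ is the Fermat point of $\triangle AB'C'$; this is \emph{not} ruled out by $\triangle A'B'C'$ having positive size or by $\varepsilon>0$, so ``constraint qualification holds'' does not follow from what you wrote. Fortunately the case is harmless: if the three unit vectors sum to zero, then $\hat{u}_{AA'}=-(\hat{u}_{B'A'}+\hat{u}_{C'A'})$ is already collinear with $\hat{u}_{B'A'}+\hat{u}_{C'A'}$, so the bisector conclusion holds in that case directly (equivalently, run the Fritz John conditions: the abnormal multiplier branch is exactly $\nabla_{A'}L=0$ and is covered by this observation). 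With that one-line fix --- and a brief remark that the budget constraint is active at the optimum, so the equality-constrained stationarity condition is the right one, while $\lambda=1$ and $\lambda=2$ are excluded as you say (the latter because the lemma assumes a nondegenerate Steiner triangle) --- your proof is complete.
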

\begin{proof}
Assume that for a given budget $L = L_{\st} + \varepsilon$, the optimal network 
$N(L_{\st} + \varepsilon)$ has corresponding optimal objective 
$J({L_{\st} + \varepsilon})$ as given in Eq.~\ref{ex:obj2}. 
We show that $CC'$ is a bisector of $\angle A'C'B'$ by analyzing the local 
changes to $L$ and $J({L_{\st} + \varepsilon})$ if we perturb $C'$. 
\vspace*{2mm}
\begin{figure}[h!]
\begin{center}
\begin{overpic}[width=1.84in,tics=5]
{./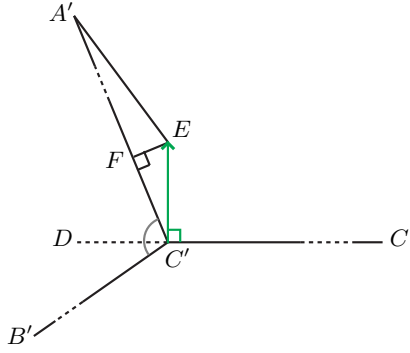}
\put(102,26){{\small $C$}}
\put(5,90){{\small $A'$}}
\put(-7,-2){{\small $B'$}}
\put(38,20){{\small $C'$}}
\put(6,26){{\small $D$}}
\put(40,57){{\small $E$}}
\put(21,48){{\small $F$}}
\end{overpic}
\end{center}
\caption{Perturbing $C'$ in an assumed optimal configuration for the 
three-terminal Euclidean \bsn problem. The figure zooms in around $C'$ without
showing $A$ and $B$. The drawing intentionally avoids assuming that 
$\triangle A'B'C'$ is an equilateral triangle.}
\label{fig:bisector}
\vspace*{-2mm}
\end{figure} 

Referring to Fig.~\ref{fig:bisector}, let $D$ be a point on the extension
of $\overrightarrow{CC'}$. A point $E$ is introduced that shifts $C'$ up
vertically (i.e., $C'E \perp C'C$) by the amount $|C'E|$, as a small 
perturbation to $C'$. Now draw a line $EF$ such that $EF \perp A'C'$ with 
$F \in A'C'$. Because $|C'E|$ is small, $|A'F| \approx |A'E|$ (this is a 
second order approximation). As $C'$ is moved to $E$, the length change 
of $A'C'$ is given by $|A'E| - |A'C'|$, which is approximately 
\[
|A'F| - |A'C'| = -|FC'| = -|C'E|\cos \angle A'C'E = -|C'E|\sin \angle A'C'D. 
\]

Following a similar analysis procedure, the length change of $B'C'$, $|B'E|-
|B'C'|$, is approximately $|C'E|\sin \angle B'C'D$. Because $C'E \perp C'C$ 
and $|C'E|$ is small, $|CC'|\approx |CE|$ (also a second order approximation).
Relating the length changes due to moving $C'$ up 
to the change of the budget $L$, the net change to $L$ is $|C'E|(\sin 
\angle B'C'D - \sin \angle A'C'D)$ (i.e., $B'C'$ becomes longer and $A'C'$ 
becomes shorter with $CC'$ unchanged, as a second order approximation). 
The change to the objective $J(L_{\st} + \varepsilon)$ is the same since 
$CC'$ is unaffected by $C'E$. 

Because the changes to $L$ and $J(L_{\st} + \varepsilon)$ are exactly the 
same, if $\angle A'C'D \ne \angle B'C'D$, then either $\overrightarrow{C'E}$ 
or a perturbation in the direction of $\overrightarrow{EC'}$ will cause both 
$|A'C'| + |B'C'| + |C'C|$ and $|A'C'| + |B'C'| + 2|C'C|$ to decrease, which 
means that $L$ and $J(L_{\st} + \varepsilon)$ can be simultaneously reduced. 
This contradicts the assumption that $L$ is the smallest budget for which the 
current objective $J(L_{\st} + \varepsilon)$ is possible. Since this cannot 
happen, it must be the case that $\angle A'C'D = \angle B'C'D$ in an optimal 
network configuration. That is, $CC'$ is a bisector of $\angle A'C'B'$. By 
symmetry, $BB'$ is a bisector of $\angle A'B'C'$ and $AA'$ is a bisector of 
$\angle B'A'C'$. 
\end{proof}

Before moving on to showing that $\triangle A'B'C'$ is equilateral, we note 
that Lemma~\ref{l:bisector} does not depend on $\varepsilon$ being small. 
Moreover, the result continues to hold if there are one or two anchors, which
can be readily verified. 

\begin{lemma}[Anchor Bisector]
\label{l:anchor-bisector}
For three terminals $A, B, C$, suppose $C'$ is an internal anchor connected to 
$C$ in an optimal network structure $N(L)$. Then $CC'$ bisects the angle formed
by the other two outgoing edges from $C'$. 
\end{lemma}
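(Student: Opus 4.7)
The plan is to adapt the perturbation argument from Lemma~\ref{l:bisector} to the general setting of an arbitrary optimal network $N(L)$. The strategy is the same as before: move $C'$ slightly in a direction perpendicular to $CC'$, so that $|CC'|$ stays fixed to first order, and then relate the first-order changes in the budget $L$ and objective $J$ to derive the bisector condition.

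By Lemma~\ref{l:anchor-degree}, $C'$ has degree exactly three, with neighbors $C$, $P$, and $Q$ (each of $P, Q$ being a terminal or another anchor). Let $\hat{u}_C, \hat{u}_P, \hat{u}_Q$ denote the unit vectors from $C'$ toward $C, P, Q$ respectively, and let $\vec{n}$ be a unit vector perpendicular to $\hat{u}_C$. For a small perturbation $\vec{\epsilon} = t\vec{n}$, to first order we have $\Delta|CC'| = 0$, $\Delta|PC'| = -t(\vec{n}\cdot\hat{u}_P)$, and $\Delta|QC'| = -t(\vec{n}\cdot\hat{u}_Q)$, so
\[
\Delta L = -t\,\vec{n}\cdot(\hat{u}_P + \hat{u}_Q).
\]

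For the objective, let $\mu_e$ denote the number of shortest paths among $\{d_{AB}, d_{AC}, d_{BC}\}$ that use edge $e$. For sufficiently small $t$ the combinatorial routing of shortest paths is preserved, so $\Delta J = \mu_{PC'}\Delta|PC'| + \mu_{QC'}\Delta|QC'|$, since the $CC'$ contribution vanishes to first order. The key claim is $\mu_{PC'} = \mu_{QC'}$. I would justify this as follows: $C$ must be a leaf on the $CC'$-side of $C'$, since any extra edge at $C$ in an optimal network would permit rerouting that renders $CC'$ redundant; consequently no shortest path can traverse $CC'$ except those terminating at $C$. Thus $d_{AC}$ uses $CC'$ together with whichever of $\{PC', QC'\}$ sits on the $A$-side (say $PC'$), $d_{BC}$ uses $CC'$ together with $QC'$, and if $d_{AB}$ additionally passes through $C'$ it must enter via $PC'$ and leave via $QC'$ (routing through $CC'$ would dead-end at $C$), incrementing $\mu_{PC'}$ and $\mu_{QC'}$ together. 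Either way $\mu_{PC'} = \mu_{QC'} = \mu \in \{1, 2\}$, so $\Delta J = \mu\,\Delta L$.

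Finally, if $CC'$ does not bisect $\angle PC'Q$, then $\hat{u}_P + \hat{u}_Q$ has a nonzero component perpendicular to $\hat{u}_C$, so some choice of $\vec{n}$ and sign of $t$ produces $\Delta L < 0$ and hence $\Delta J < 0$ simultaneously. The perturbed network uses strictly less budget than $L$ and attains a strictly smaller $J$, so at the original budget $L$ it also achieves a strictly smaller $J$ than $N(L)$, contradicting optimality. Hence $\hat{u}_P + \hat{u}_Q$ is parallel to $\hat{u}_C$, which is exactly the condition that $CC'$ bisects $\angle PC'Q$. The main obstacle is the clean verification of $\mu_{PC'} = \mu_{QC'}$: ruling out shortest paths that combine $CC'$ with only one of $\{PC', QC'\}$, which in turn rests on establishing that $C$ is a leaf in the $CC'$-branch of any optimal network in which $C'$ is an internal anchor attached to $C$.
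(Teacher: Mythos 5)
Your proposal is correct and is essentially the paper's own route: the paper gives no separate proof of this lemma, asserting only that the perpendicular-perturbation argument of Lemma~\ref{l:bisector} carries over unchanged, which is exactly what you formalize (perturb $C'$ orthogonally to $CC'$ so $|CC'|$ is fixed to first order, observe $\Delta J$ is a positive multiple of $\Delta L$, and conclude both can be decreased unless $CC'$ bisects the angle). The one point you flag as an obstacle --- that $\mu_{PC'}=\mu_{QC'}$, i.e.\ no shortest path crosses $CC'$ and exits through another edge at $C$ --- is immediate in the one- and two-anchor configurations where the paper actually invokes the lemma, since $C$ is a leaf there.
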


We now prove a key structural property of \bsn for three terminals involving 
three anchors.  

\begin{theorem}[Steiner Triangle for Three Anchors]
\label{t:equilateral}
For three terminals $A, B$, and $C$ with a Steiner point, assume 
that $N(L_{\st} + \varepsilon)$ is composed of the Steiner triangle 
$\triangle A'B'C'$ and segments $AA'$, $BB'$ and $CC'$. Then $\triangle A'B'C'$
is an equilateral triangle with its center being the Steiner point of the 
terminals. The center of $\triangle A'B'C'$ is the intersection point of $AA', 
BB'$ and $CC'$.
\end{theorem}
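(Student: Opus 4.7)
The plan is to complement Lemma~\ref{l:anchor-bisector}, which only pins down the \emph{directions} of the terminal edges $AA'$, $BB'$, $CC'$ relative to $\triangle A'B'C'$, with a second first-order optimality condition that constrains the \emph{sizes} of the three interior angles of $\triangle A'B'C'$. The key observation will be that this additional condition depends only on a single global scalar, namely the Lagrange multiplier $\lambda$ for the budget constraint $L=L_{\st}+\varepsilon$, so the three interior angles $\alpha = \angle B'A'C'$, $\beta = \angle A'B'C'$, $\gamma = \angle A'C'B'$ will all have to satisfy the same equation and hence coincide. Together with the angle-sum identity $\alpha+\beta+\gamma=\pi$, that forces $\alpha=\beta=\gamma=\pi/3$, giving equilaterality; the Steiner-point identification will then follow by pure geometry.

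More concretely, I would view $N(L_{\st}+\varepsilon)$ as a minimizer of $J(A',B',C')$ over the six-dimensional anchor configuration space subject to $L(A',B',C') = L_{\st}+\varepsilon$. By Lemma~\ref{l:anchor-int} the anchors lie in the interior of $\triangle ABC$, so each edge length is a smooth function of the anchor coordinates and the standard Lagrange condition $\nabla J + \lambda \nabla L = 0$ applies. Writing $\hat{e}_{A'X}$ for the unit vector from $A'$ to $X$ and using $\nabla_{A'}|A'X| = -\hat{e}_{A'X}$, the component of this condition at $A'$ reads
\[
(2+\lambda)\,\hat{e}_{A'A} \;+\; (1+\lambda)\bigl(\hat{e}_{A'B'} + \hat{e}_{A'C'}\bigr) \;=\; 0.
\]
The direction content of this vector identity is precisely Lemma~\ref{l:anchor-bisector}; the new content appears on taking Euclidean magnitudes. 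Using $|\hat{e}_{A'B'}+\hat{e}_{A'C'}| = 2\cos(\alpha/2)$, I would obtain the magnitude identity
\[
\cos(\alpha/2) \;=\; \frac{2+\lambda}{2(1+\lambda)},
\]
with analogous formulas at $B'$ and $C'$ in terms of $\beta$ and $\gamma$. Since cosine is strictly monotone on $(0,\pi/2)$ and $\alpha/2,\beta/2,\gamma/2$ all lie there, a common right-hand side forces $\alpha=\beta=\gamma$, whence each equals $\pi/3$ by angle-sum.

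Once $\triangle A'B'C'$ is known to be equilateral, the rest is geometric bookkeeping. The incenter, centroid, and circumcenter of an equilateral triangle coincide at a single point $O$; by Lemma~\ref{l:anchor-bisector} the lines $AA'$, $BB'$, $CC'$ are precisely the extended interior angle bisectors of $\triangle A'B'C'$, hence they meet concurrently at $O$. Because the rays $\overrightarrow{OA'},\overrightarrow{OB'},\overrightarrow{OC'}$ are mutually separated by $2\pi/3$, their extensions through the anchors to the terminals satisfy $\angle AOB = \angle BOC = \angle COA = 2\pi/3$, which is the defining property of the Fermat point of $\triangle ABC$; by uniqueness this identifies $O$ with the Steiner point.

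The hard part will be setting up the Lagrange framework cleanly: confirming that we really are at a constrained local minimum (and not merely a critical point), ruling out the degenerate values $\lambda\in\{-1,-2\}$ that would invalidate the magnitude step, and pinning down the sign of $\lambda$. The economic interpretation $dJ/dL<0$ forces $\lambda>0$, so the absolute values can be dropped, and the $\lambda = (\sqrt{3}-1)/2$ consistent with $\alpha=\pi/3$ lands safely away from the degenerate set. Smoothness of the edge lengths (from Lemma~\ref{l:anchor-int}) together with the fixed topology of $N(L_{\st}+\varepsilon)$ (from Lemma~\ref{l:anchor-number-3-isp}) reduce this to a brief sign/nondegeneracy check rather than a genuine obstruction.
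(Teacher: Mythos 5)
Your proposal is correct, but it reaches the conclusion by a genuinely different route than the paper. The paper argues synthetically: having the bisector property (Lemma~\ref{l:bisector}), it extends $AA'$, $BB'$, $CC'$ to their common point $O'$ and applies a second perturbation --- rigidly translating the whole triangle $\triangle A'B'C'$ perpendicular to $CC'$ --- to conclude that $CC'$ must also bisect $\angle AO'B$; combining the two bisection facts by angle chasing gives $\angle B'A'C'=\angle A'B'C'$, symmetry gives equilaterality, and the $2\pi/3$ angles at $O'$ identify it with the Steiner point. You instead encode first-order optimality as a single-multiplier stationarity condition $\nabla J+\lambda\nabla L=0$ over the anchor coordinates (legitimate here: the topology is fixed by hypothesis, the anchor variables are unconstrained in the plane and edge lengths are smooth since the anchors are distinct and away from the terminals, and $\nabla L\neq 0$ because three simultaneous $2\pi/3$ anchor angles would violate the angle sum of the triangle), whose direction content at each anchor reproduces Lemma~\ref{l:anchor-bisector} and whose magnitude content, $\cos(\alpha/2)=\tfrac{|2+\lambda|}{2|1+\lambda|}$ with the \emph{same} $\lambda$ at all three anchors, forces the three interior angles to coincide; note that for equilaterality you do not even need the sign of $\lambda$, only the exclusion of $\lambda=-1$ (which would force $2+\lambda=0$, impossible) and $\lambda=-2$ (which would collapse the triangle to collinear anchors), while treating the budget as an active inequality constraint hands you $\lambda\ge 0$ for free and settles the orientation needed for the Fermat-point identification of the center. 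What your approach buys is uniformity and quantitative information: the same stationarity equation covers all anchor counts, and the relation $\lambda=\tfrac{2-2\cos\alpha'}{2\cos\alpha'-1}=-\,dJ/dL$ (with $\alpha'$ the half-angle) exactly matches the paper's later rate-of-change formula, with $\lambda=(\sqrt3-1)/2$ at $\alpha=\pi/3$ consistent with Eq.~\eqref{eq:rate-of-change-3c}. What the paper's argument buys is elementarity: it needs no Lagrange machinery or constraint-qualification bookkeeping, only local length perturbations of the kind already used for Lemma~\ref{l:bisector}.
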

\begin{proof}
Again assuming an optimal solution, extend line segments $AA'$, $BB'$, and 
$CC'$ so that they intersect (see Fig.~\ref{fig:bisector-2}). 

\begin{figure}[h]
\begin{center}
\begin{overpic}[width=2.3in,tics=5]
{./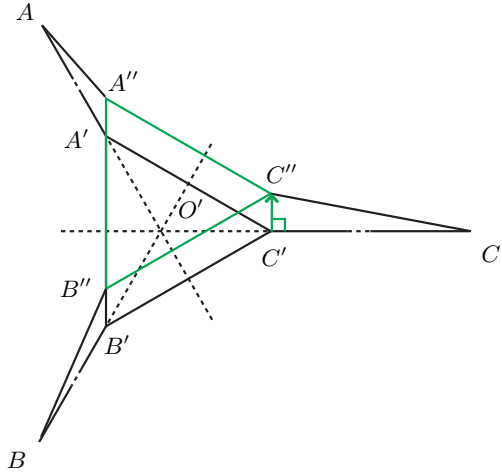}
\put(-4,97){{\small $A$}}
\put(-6,-5){{\small $B$}}
\put(102,42){{\small $C$}}
\put(7,68){{\small $A'$}}
\put(16,21){{\small $B'$}}
\put(52,41){{\small $C'$}}
\put(33,52){{\small $O'$}}
\put(17,81){{\small $A''$}}
\put(6,34){{\small $B''$}}
\put(53,60){{\small $C''$}}
\end{overpic}
\end{center}
\vspace*{-2mm}
\caption{Applying a perturbation to $\triangle A'B'C'$ that lifts it 
vertically along $CC'$, which keeps the length of $CC'$ unchanged in 
a first order approximation.}
\label{fig:bisector-2}
\end{figure} 

Because they are bisectors of $\triangle A'B'C'$, by Lemma~\ref{l:bisector}, 
they must meet at the same point $O'$. 
For this setting, we again apply a perturbation argument used in proving 
Lemma~\ref{l:bisector}, this time lift the entire $\triangle A'B'C'$ in a 
direction perpendicular to $CC'$. Let the perturbed triangle be 
$\triangle A''B''C''$. Using the same argument, this time applied to the 
length changes of $AA'$ and $BB'$, we can reach the conclusion that the line 
$CC'$ must be a bisector of $\angle  AO'B$. In other words, shifting $AA'$ and 
$BB'$ synchronously will not reduce the objective function only if $CC'$ bisects 
$\angle AO'B$.

Similarly, $AA'$ must be a bisector of 
$BO'C$ and $BB'$ must be a bisector of $AO'C$. Using that $CC'$ bisects 
$AO'B$ and $A'C'B'$, it can be derived that $\angle O'A'C' = \angle O'B'C'$, 
which in turn shows that $\angle B'A'C' = \angle A'B'C'$. By symmetry, it can 
then be concluded that $\triangle A'B'C'$ is an equilateral triangle. 
This further shows that $\angle A'O'B' =\angle A'O'C' = \angle B'O'C' = 2\pi/3$,
implying that $O'$, the center of $\triangle A'B'C'$, is the Steiner point $O$
of the terminals. 
\end{proof}

From Theorem~\ref{t:equilateral}, we can draw the following conclusion. For 
three terminals with a Steiner point, as the budget $L$ goes just 
beyond $L_{\st}$, an equilateral triangle will ``grow'' out the Steiner point
toward the terminals. Moreover, whenever there are three anchors, they 
must form an equilateral triangle. All such equilateral triangles have their 
vertices lying on the line segments formed by the terminals and the Steiner 
point, as illustrated in Fig.~\ref{fig:three-anchors}. We have not yet show, 
however, that as $L$ grows, the anchors cannot go from three to fewer and then 
become three again. We delay this after the structures with fewer anchors are 
characterized.

\begin{figure}[h]
\begin{center}
\begin{overpic}[width=2.59in,tics=5]
{./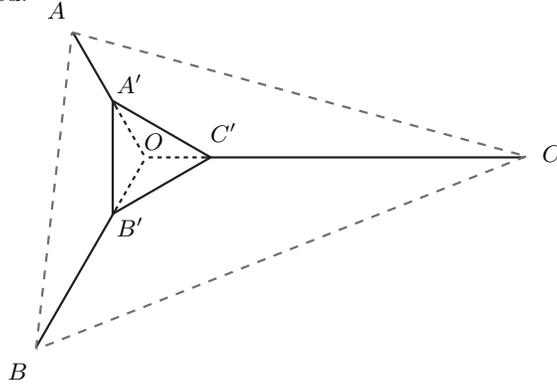}
\put(3,67){{\small $A$}}
\put(-5,-6){{\small $B$}}
\put(103,38){{\small $C$}}
\put(17,52){{\small $A'$}}
\put(17,23){{\small $B'$}}
\put(36,42){{\small $C'$}}
\put(22.5,40.5){{\small $O$}}
\end{overpic}
\end{center}
\caption{For three terminals with a Steiner point (which is always internal), 
when there are three anchors, they always form an equilateral triangle.}
\label{fig:three-anchors}
\end{figure} 

\subsection{One and Two Anchors}
If there are two anchors, they must both be connected to one shared terminal,
say $A$, and each connecting to a unique terminal in $B$ and $C$. Let the 
anchors be $B'$ and $C'$. $N(L)$ then consists of five segments $AB'$, $AC'$, 
$BB'$, $CC'$, and $B'C'$. It can be shown that $\triangle AB'C'$ is an 
isosceles triangle (see, e.g., Fig.~\ref{fig:two-anchors}).

\begin{figure}[h]
\begin{center}
\vspace{1mm}
\begin{overpic}[width=2.59in,tics=5]
{./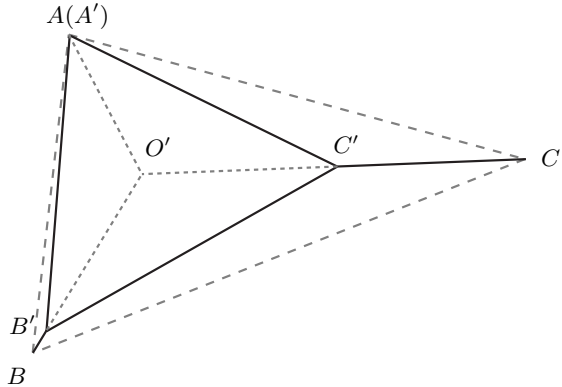}
\put(3,67){{\small $A$($A'$)}}
\put(-5,-6){{\small $B$}}
\put(103,38){{\small $C$}}
\put(-4.5,4){{\small $B'$}}
\put(61,40.5){{\small $C'$}}
\put(23,40){{\small $O'$}}
\end{overpic}
\end{center}
\caption{For three terminals with a Steiner point, when there are two 
anchors, they always form an isosceles triangle with one of the terminals.}
\label{fig:two-anchors}
\end{figure}

\begin{proposition}[Steiner Triangle for Two Anchors]\label{l:s2a}
For three terminals $A, B$, and $C$ with a Steiner point, if the 
optimal network $N(L)$ has two anchors $B', C'$, then these two anchors form 
an isosceles triangle with one of the terminals, e.g., $A$. $AB' = AC'$. 
\end{proposition}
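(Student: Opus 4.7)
The plan is to adapt the perturbation argument used for Lemma~\ref{l:bisector} and Theorem~\ref{t:equilateral} to the two-anchor configuration. First I would unpack the structure: each anchor has degree three (Lemma~\ref{l:anchor-degree}), so with two anchors and the mandatory edge $B'C'$ between them, pigeonhole on the remaining four anchor-to-terminal edges forces one terminal (say $A$) to be shared by both anchors, producing the five segments $AB'$, $AC'$, $BB'$, $CC'$, $B'C'$. The direct shortest paths $d_{AB}=|AB'|+|BB'|$, $d_{AC}=|AC'|+|CC'|$, $d_{BC}=|BB'|+|B'C'|+|CC'|$ (each shorter than the through-$A$ alternative by the triangle inequality in $\triangle AB'C'$) sum to
\[
J(L) = L + |BB'| + |CC'|.
\]
Thus at fixed $L$, minimizing $J$ is equivalent to minimizing $|BB'|+|CC'|$ over the three-dimensional budget surface in the positions of $B'$ and $C'$.

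Next I would apply Lemma~\ref{l:anchor-bisector} at each anchor: $BB'$ bisects $\angle AB'C'$, giving $\angle AB'B = \angle BB'C' =: \phi$, and $CC'$ bisects $\angle AC'B'$, giving $\angle AC'C = \angle CC'B' =: \psi$. Since the angles of $\triangle AB'C'$ at $B'$ and $C'$ are $2\phi$ and $2\psi$, proving $|AB'|=|AC'|$ reduces (by the law of sines) to showing $\phi=\psi$.

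The key step is a coupled radial perturbation: simultaneously slide $B'$ along the ray $\overrightarrow{BB'}$ by $\delta_1$ and $C'$ along $\overrightarrow{CC'}$ by $\delta_2$. Using the bisector angles, a first-order calculation gives $\Delta|BB'|=\delta_1$, $\Delta|AB'|=\delta_1\cos\phi$, $\Delta|B'C'|=\delta_1\cos\phi+\delta_2\cos\psi$, and symmetrically on the $C'$ side, so that
\[
\Delta L = \delta_1(1+2\cos\phi)+\delta_2(1+2\cos\psi), \qquad \Delta J = \delta_1(2+2\cos\phi)+\delta_2(2+2\cos\psi).
\]
Imposing $\Delta L=0$ to stay on the budget surface and $\Delta J\ge 0$ at the optimum for all $\delta_1$ forces $\cos\phi=\cos\psi$, hence $\phi=\psi$, and therefore $|AB'|=|AC'|$.

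The most delicate piece is the first-order bookkeeping: tracking how each of the five edge lengths responds to each of the two perturbation parameters. Lemma~\ref{l:anchor-bisector} is exactly what makes the trigonometry symmetric, since the angles from $BB'$ to $B'A$ and to $B'C'$ coincide (and likewise at $C'$); without this, four distinct cosines would appear and the cancellation producing $\cos\phi=\cos\psi$ would not close cleanly. Beyond this, the argument reuses the style of the earlier perturbation proofs and requires no further case analysis.
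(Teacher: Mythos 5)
Your argument is sound and reaches the right conclusion, but it executes the perturbation differently from the paper. The paper also starts from Lemma~\ref{l:anchor-bisector} at $B'$ and $C'$, but then introduces the point $O'$ where the extensions of $BB'$ and $CC'$ meet (the incenter of $\triangle AB'C'$), perturbs by rotating $\triangle AB'C'$ slightly about $A$ (which to first order changes only $|BB'|$ and $|CC'|$) to conclude that $AO'$ bisects $\angle B'O'C'$, and finishes by angle chasing ($\angle B'O'A=\angle C'O'A \Rightarrow \angle AB'O'=\angle AC'O' \Rightarrow \angle AB'C'=\angle AC'B'$). You instead run a two-parameter first-order computation: radial slides of $B'$ and $C'$ along $BB'$ and $CC'$, the identity $J=L+|BB'|+|CC'|$, and the Lagrange-type condition that $\nabla J \parallel \nabla L$ on the budget surface, which yields $\cos\phi=\cos\psi$ directly. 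Your route is more explicit and self-contained (no auxiliary point $O'$, no angle chase), and it dovetails nicely with the paper's later rate-of-change formula $\frac{dJ}{dL}=\frac{2\cos\alpha-2}{2\cos\alpha-1}$, of which your coefficients are exactly the numerator and denominator; the paper's route is shorter on computation and recycles the rotation idea from Theorem~\ref{t:equilateral}. One slip in your bookkeeping: with $B'$ sliding along $\overrightarrow{BB'}$ (away from $B$, toward the incenter), the bisector property gives $\Delta|AB'|=-\delta_1\cos\phi$ and $\Delta|B'C'|=-\delta_1\cos\phi-\delta_2\cos\psi$ while $\Delta|BB'|=+\delta_1$, so the correct forms are $\Delta L=\delta_1(1-2\cos\phi)+\delta_2(1-2\cos\psi)$ and $\Delta J=\delta_1(2-2\cos\phi)+\delta_2(2-2\cos\psi)$; your $+$ signs on the cosine terms are inconsistent with the sign of $\Delta|BB'|$. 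The proportionality argument happens to give $\cos\phi=\cos\psi$ either way, so the conclusion survives, but the signs should be fixed for the writeup.
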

\begin{proof}
By Lemma~\ref{l:anchor-bisector}, $BB'$ bisects $\angle AB'C'$ and $CC'$ 
bisects $\angle AC'B'$. Let the extensions of $BB'$ and $CC'$ meet at $O'$
(see Fig.~\ref{fig:two-anchors}). Then $AO'$ bisects $\angle B'AC'$. 
Using the perturbation argument from the proof of Theorem~\ref{t:equilateral}, 
applied to perturb the lengths of $BB'$ and $CC'$, we can show that $AO'$ is 
also a bisector of $\angle B'O'C'$ (we do this by ``rotating'' $\triangle 
AB'C'$ with center $A$ slightly). 
This means that $\angle B'O'A = \angle C'O'A$, which in turn implies that 
$\angle AB'O' = \angle AC'O'$ and further implies $\angle AB'C' = \angle 
AC'B'$. Therefore, $\triangle AB'C'$ is an isosceles triangle and $AB' = AC'$.  
\end{proof}

Following the same line of reasoning, when there is a single anchor in an 
optimal network $N(L)$, e.g., $C'$ that is connected to $A, B$, and $C$, if 
$C'$ is not the Steiner point, $N(L)$ must contain one of $AB$, $BC$, and $AC$.
Suppose $N(L)$ contains $AB$, then all we know is that $CC'$ must bisect 
$\angle AC'B$. See Fig.~\ref{fig:spectrum}(d) for an example.

\section{Evolution of the Budgeted Steiner Network}\label{sec:internal}
\subsection{With Steiner Point}
Having established the optimal configuration when there are $1$-$3$ anchors, 
we now piece them together to understand the evolution of the network. 
Intuitively, as the budget $L$ increases, the evolution of the optimal 
network $N(L)$ would look like that shown in Fig.~\ref{fig:spectrum}, going 
from Steiner tree to having three anchors, then two, then one, and finally 
becoming the triangle of the three terminals. To show this is the actual network
evolution pathway, however, we must show that there cannot be discrete 
jumps in \bsn structures, e.g., going from three anchors to two anchors and then 
back to three anchors.

We proceed to show that the sequence in Fig.~\ref{fig:spectrum} is indeed how 
$N(L)$ evolves as $L$ increases by analyzing how $J(L)$ changes as $L$ 
changes, i.e., $\frac{dJ}{dL}$.

\begin{lemma}[Rate of Change at Anchors]
For three terminals $A$, $B$, and $C$, let $C'$ be an anchor connected to 
$C$. Let the angle formed by the other two edges emanating from $C'$ other than
$CC'$ be $2\alpha$. As $C'$ moves closer to $C$, the rate of change to the 
objective function $\frac{dJ}{dL}$ due to the change to $CC'$ is 
\begin{align}\label{eq:rate-of-change}
\frac{dJ}{dL} = \frac{2\cos\alpha - 2}{2\cos\alpha - 1}. 
\end{align}
\end{lemma}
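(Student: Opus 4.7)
The plan is to make a first-order perturbation of $C'$ toward $C$, compute the infinitesimal changes $\Delta L$ and $\Delta J$, and form their ratio. I would set up local coordinates with $C'$ at the origin and $C$ on the positive $x$-axis, and label the other two neighbors of $C'$ in the network as $P$ and $Q$. By Lemma~\ref{l:anchor-bisector}, the extension of $CC'$ past $C'$ bisects $\angle PC'Q = 2\alpha$, so $P$ and $Q$ lie along rays at angles $\pi - \alpha$ and $\pi + \alpha$ from the positive $x$-axis. I would then displace $C'$ to $C'' = (\delta, 0)$ for a small $\delta > 0$ and work to first order in $\delta$.

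Computing $\Delta L$ is a routine first-order Taylor expansion: $|CC'|$ decreases by $\delta$, while $|C'P|$ and $|C'Q|$ each increase by $\delta\cos\alpha$, giving $\Delta L = \delta(2\cos\alpha - 1)$.

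For $\Delta J$ the crucial step is combinatorial: I would verify, for each of the three anchor configurations characterized earlier---three anchors (Theorem~\ref{t:equilateral}), two anchors (Proposition~\ref{l:s2a}), and one anchor---that the edge $CC'$ lies on the shortest path from $C$ to both of the other two terminals, whereas $C'P$ and $C'Q$ each lie on exactly one such shortest path (one routing toward $A$, the other toward $B$), and that the shortest path realizing $d_{AB}$ does not use $CC'$. Granted this, $d_{AC}$ and $d_{BC}$ each change by $-\delta + \delta\cos\alpha$ while $d_{AB}$ is unchanged, so $\Delta J = 2\delta(\cos\alpha - 1)$, and taking the ratio yields $dJ/dL = (2\cos\alpha - 2)/(2\cos\alpha - 1)$.

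The main obstacle is this shortest-path bookkeeping rather than the geometry. One must confirm case by case that $CC'$ appears with multiplicity $2$ across the three shortest paths while $C'P$ and $C'Q$ each appear with multiplicity $1$. Notably, the bare Steiner-point configuration---in which all three edges at the branch vertex have multiplicity $2$---is excluded here because $C'$ is assumed to be a genuine anchor, i.e., distinct from the underlying Steiner point. The length expansions themselves are standard and require no further ingenuity.
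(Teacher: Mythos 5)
Your proposal is correct and follows essentially the same route as the paper: a first-order perturbation of $C'$ along $C'C$, using Lemma~\ref{l:anchor-bisector} to get the equal angles, yielding $\Delta L = \delta(2\cos\alpha - 1)$ and $\Delta J = 2\delta(\cos\alpha - 1)$ and hence the stated ratio. The only difference is presentational: the paper parametrizes the move via an auxiliary foot-of-perpendicular point $E$ and compresses your multiplicity bookkeeping into the single clause that $C'C$ lies on two shortest paths while each other edge at $C'$ lies on one, which your case-by-case check simply makes explicit.
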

\begin{proof}
Fig.~\ref{fig:roc} shows the setting where $C'$ is moved along $C'C$ for a 
small amount. By the bisector Lemma~\ref{l:anchor-bisector}, the addition of 
length (in green) to the two edges coming out of $C'$ that are not $CC'$ is $2|EC'|$ 
while the reduction of length to $|CC'|$ is $|C'E|/\cos \alpha$ (the red 
segment). Therefore, the change to the budget due to this is $\Delta L = 2 
|C'E| - |C'E|/\cos \alpha$. 

\begin{figure}[h]
\begin{center}
\vspace{1mm}
\begin{overpic}[width=1.8in,tics=5]
{./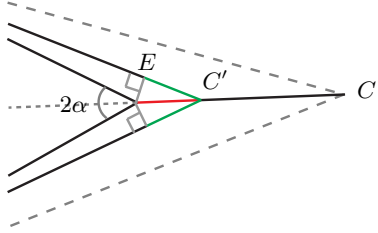}
\put(103,38){{\small $C$}}
\put(58,40.5){{\small $C'$}}
\put(39,46){{\small $E$}}
\put(16.5,34){{\small $2\alpha$}}
\end{overpic}
\end{center}
\caption{Moving $C'$ along $C'C$ for a small amount.}
\label{fig:roc}
\end{figure} 

On the other hand, the change to the objective function value is $\Delta J
= -(2|C'E|/\cos \alpha - 2|C'E|)$ because $C'C$ contributes to two shortest
paths. Dividing $\Delta J$ over $\Delta L$ yields Eq.~\ref{eq:rate-of-change}. 
\end{proof}

\begin{corollary}[Range of Change, Three Anchors]
For three terminals, when there are three anchors, 
\begin{align}\label{eq:rate-of-change-3c}
\frac{dJ}{dL} = \frac{1-\sqrt{3}}{2}.
\end{align}
\end{corollary}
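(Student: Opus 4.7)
The plan is to apply the preceding lemma directly and reduce the corollary to a single substitution. By Theorem~\ref{t:equilateral}, when $N(L)$ contains three anchors $A', B', C'$, the Steiner triangle $\triangle A'B'C'$ is equilateral, so each of its interior angles equals $\pi/3$. At the anchor $C'$, the two edges other than $CC'$ are precisely $C'A'$ and $C'B'$, which form the interior angle $\angle A'C'B' = \pi/3$. In the notation of the previous lemma this means $2\alpha = \pi/3$, i.e., $\alpha = \pi/6$, and hence $\cos\alpha = \sqrt{3}/2$.

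Next I would substitute into the formula
\[
\frac{dJ}{dL} = \frac{2\cos\alpha - 2}{2\cos\alpha - 1}
\]
to obtain $(\sqrt{3}-2)/(\sqrt{3}-1)$, and then rationalize by multiplying numerator and denominator by $\sqrt{3}+1$. The routine calculation
\[
\frac{(\sqrt{3}-2)(\sqrt{3}+1)}{(\sqrt{3}-1)(\sqrt{3}+1)} = \frac{3 + \sqrt{3} - 2\sqrt{3} - 2}{2} = \frac{1-\sqrt{3}}{2}
\]
gives the claimed value. Since the three anchors play symmetric roles in the equilateral configuration, retracting any one of them produces the same rate of change; a brief remark suffices to justify that the global $dJ/dL$ during this phase of the evolution coincides with the single-anchor contribution.

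There is no real obstacle here: the content of the corollary is essentially packaged in Theorem~\ref{t:equilateral} together with the preceding lemma. The only thing to be careful about is the identification of the angle $2\alpha$ with the interior angle of the Steiner triangle at the anchor (not with the supplement or the reflex angle), which follows from the fact that $CC'$ extended passes through the center of $\triangle A'B'C'$ by the bisector result (Lemma~\ref{l:bisector}), so $2\alpha$ is indeed the interior angle $\angle A'C'B' = \pi/3$.
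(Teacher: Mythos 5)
Your proposal is correct and follows the same route as the paper: identify the anchor angle as $2\alpha = \pi/3$ via the equilateral Steiner triangle (Theorem~\ref{t:equilateral}), so $\alpha = \pi/6$, and substitute into Eq.~\eqref{eq:rate-of-change} to get $(\sqrt{3}-2)/(\sqrt{3}-1) = (1-\sqrt{3})/2$. The extra remarks on symmetry of the three anchors and on identifying $2\alpha$ with the interior angle are fine but not needed beyond what the paper already does.
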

\begin{proof}
For three anchors, $\alpha$ in Eq.~\eqref{eq:rate-of-change} is $\pi/6$. We then 
have $dJ/dL = (\sqrt{3} - 2)/(\sqrt{3}-1) = (1 - \sqrt{3})/2$.
\end{proof}

\begin{corollary}[Range of Change, One and Two Anchors]
For three terminals, when there are one of two anchors, let the angle formed
at the anchor belonging to the triangle structure of the network be $2\alpha$, 
then, 
\begin{align}\label{eq:rate-of-change-3c}
\frac{dJ}{dL} = \frac{2\cos\alpha - 2}{2\cos\alpha  - 1}. 
\end{align}
\end{corollary}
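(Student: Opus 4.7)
The plan is to reduce both cases directly to the Rate of Change at Anchors Lemma, exploiting the symmetry already established in Proposition~\ref{l:s2a}. In each configuration, the budget $L$ and objective $J$ vary only through motion of the anchor(s) along their respective connecting edges to the terminals; once the free degrees of freedom are identified, the preceding lemma supplies the local ratio, and no new perturbation analysis is required.

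For the single-anchor case, the lone anchor $C'$ alone can absorb an infinitesimal change in the budget since the terminals are fixed. Applying the Rate of Change Lemma at $C'$, with $2\alpha$ being the angle formed at $C'$ by the two edges other than $CC'$, immediately yields $dJ/dL = (2\cos\alpha - 2)/(2\cos\alpha - 1)$. No aggregation across anchors is needed.

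For the two-anchor case, Proposition~\ref{l:s2a} asserts that $\triangle AB'C'$ is isosceles, symmetric about the bisector of $\angle B'AC'$, and by Lemma~\ref{l:anchor-bisector} the outgoing edges $BB'$ and $CC'$ bisect $\angle AB'C'$ and $\angle AC'B'$ respectively. The key step will be arguing that any optimal infinitesimal perturbation preserves this reflection symmetry. I plan to handle this by a reflection-and-averaging argument: if the perturbation broke symmetry, reflecting across the $A$-bisector would yield a second optimum at the same $L$, and any convex combination with the original would lower $J$ to first order, contradicting optimality. Once synchronized motion is established, $B'$ and $C'$ slide along $BB'$ and $CC'$ simultaneously with equal apex-side angles $2\alpha$ (by the isosceles property), and the Rate of Change Lemma yields identical local ratios at each anchor; summing a pair of identical ratios preserves the value $(2\cos\alpha - 2)/(2\cos\alpha - 1)$.

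The main obstacle will be formalizing the symmetry step: one must rule out motions in which only one anchor moves while the other remains fixed, or in which the two anchors move at different speeds. The averaging argument sketched above should suffice, but an equivalent alternative is to reduce the two-anchor optimization to a single scalar parameter (for example, the length $|AB'| = |AC'|$) and verify directly that this one-parameter family captures all local optima within the two-anchor class.
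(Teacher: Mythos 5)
Your proposal is correct and takes essentially the paper's route: the paper offers no separate proof for this corollary, treating it as an immediate application of the Rate of Change at Anchors lemma at the anchor(s) of the triangular structure, which is exactly your reduction. One remark: the symmetry-preservation step you call the main obstacle is not actually needed, because each anchor individually contributes the identical ratio $\frac{2\cos\alpha - 2}{2\cos\alpha - 1}$ (via Lemma~\ref{l:anchor-bisector} and Proposition~\ref{l:s2a}), so any first-order combination of anchor motions---symmetric or not, equal speeds or not---is a combination of equal ratios and yields the same $dJ/dL$; the reflection-and-averaging argument, which as stated is shaky, can simply be dropped.
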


Since $0 < 2\alpha \le \pi/2$, $\alpha \in (0, \pi/4]$. Let $\cos\alpha = x$, 
$x \in [\frac{\sqrt{2}}{2}, 1)$. Eq.~\ref{eq:rate-of-change} becomes $g(x)
 = \frac{2x - 2}{2x - 1}$. It is straightforward to derive (using derivatives)
that $g(x)$ is negative on the given range of $x$ and monotonically increases 
to $0$ as $x \to 1$. This means, with reference to Fig.~\ref{fig:roc}, that 
the magnitude of $\frac{dJ}{dL}$ becomes smaller as $C'$ gets closer to $C$ 
($\alpha$ decreases). This allows us to show that $J(L)$ decreases faster when 
there are more anchors. We begin with showing that internal angles at anchors
cannot exceed $\pi/3$. 

\begin{lemma}[Feasible Anchor Angle Configurations]\label{l:feasible}
For three terminals and an optimal Steiner network, the internal angles of the 
triangular structure of the network at non-terminal anchors are always no more than $\pi/3$. 
\end{lemma}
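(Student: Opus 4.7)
My plan is a case analysis on the number of anchors, which is at most three by Lemma~\ref{l:anchor-number-3-isp}. The three-anchor case is immediate from Theorem~\ref{t:equilateral}, as the equilateral anchor triangle has each internal angle equal to $\pi/3$.

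For the two-anchor case, let $B', C'$ be the anchors with apex terminal $A$. Proposition~\ref{l:s2a} yields $AB'=AC'$, so $\triangle AB'C'$ is isosceles and the two anchor angles agree, each equal to $(\pi-\beta)/2$ with $\beta = \angle B'AC'$. It therefore suffices to show $\beta \ge \pi/3$. The two-anchor phase only arises after the three-anchor equilateral triangle first touches a terminal (namely $A$), so at the instant of transition we have the equilateral configuration with $\beta = \pi/3$. I would then argue that as the budget $L$ grows past this transition value, $\beta$ cannot decrease below $\pi/3$---the most direct route is to combine the rate-of-change formula $dJ/dL = (2\cos\alpha - 2)/(2\cos\alpha - 1)$ (with $2\alpha$ the anchor angle, so $\alpha = (\pi-\beta)/4$) with the convexity of the value function $J(L)$, which forces the rate to be non-decreasing in $L$; the monotonicity of $g(x) = (2x-2)/(2x-1)$ on $(\frac{1}{2},1)$ then translates this to $\beta \ge \pi/3$.

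For the one-anchor case, the single anchor $C'$ is connected to all three terminals, the network contains one terminal-terminal edge (say $AB$), and $CC'$ bisects $\angle AC'B$ by Lemma~\ref{l:anchor-bisector}. By the same continuity-plus-convexity argument, at the transition from the two-anchor phase the anchor angle $\angle AC'B$ equals its value at the end of the preceding phase (hence $\le \pi/3$), and the rate-of-change identity together with convexity keeps $\angle AC'B \le \pi/3$ for all subsequent $L$.

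The technical crux, and the step I expect to be the main obstacle, is establishing that the optimal-value function $J(L)$ is convex in $L$, or equivalently that $dJ/dL$ is non-decreasing in $L$ along the optimal evolution. A clean proof likely requires showing that if some anchor had angle exceeding $\pi/3$ (i.e., $\alpha > \pi/6$), one could retract that anchor slightly away from its terminal to reduce $L$ while also reducing $J$, contradicting optimality. This local perturbation argument is delicate because moving a single anchor generally forces the other anchors to move as well in order to maintain the bisector conditions from Lemma~\ref{l:anchor-bisector}.
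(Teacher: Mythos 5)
There is a genuine gap, and it is twofold. First, your two-anchor argument is circular: you justify $\beta\ge\pi/3$ by asserting that ``the two-anchor phase only arises after the three-anchor equilateral triangle first touches a terminal,'' but that ordering of phases (no jumps between anchor configurations) is exactly what the paper later proves \emph{using} Lemma~\ref{l:feasible}; this lemma is needed precisely because, at this point, nothing rules out the optimal network jumping directly to a two- or one-anchor configuration with an arbitrary anchor angle. Second, the ingredient you yourself flag as the crux --- convexity of the value function $J(L)$, equivalently monotonicity of $dJ/dL$ along the evolution --- is never established, and the fallback you sketch does not rescue it: retracting a single anchor $C'$ away from its terminal, per the rate-of-change computation, gives $\Delta L = -(2|C'E|-|C'E|/\cos\alpha)<0$ but $\Delta J = +(2|C'E|/\cos\alpha-2|C'E|)>0$ whenever the anchor angle $2\alpha$ is below $2\pi/3$; so for an anchor angle in $(\pi/3,\,2\pi/3)$ a single-anchor retraction reduces the budget while \emph{increasing} the objective, and yields no contradiction. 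Your three-anchor case (equilateral, angles exactly $\pi/3$) is fine and matches the paper.

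The paper's proof avoids both difficulties with a budget-neutral exchange between vertices of the same triangle. In the two-anchor case $\triangle AB'C'$ is isosceles; if $\angle AB'C'=\angle AC'B'>\pi/3$ then $\angle B'AC'<\pi/3$, so by the rate formula $0>\frac{dJ}{dL}\big|_{A}>\frac{dJ}{dL}\big|_{B'}=\frac{dJ}{dL}\big|_{C'}$ (smaller angle means smaller magnitude of $dJ/dL$). Sliding the triangle along the bisector through $A$ while retracting at $B'$ and $C'$ so that $L$ is unchanged then strictly decreases $J$, contradicting optimality of $N(L)$ at that budget; the one-anchor case is handled the same way, since an anchor angle above $\pi/3$ forces some other internal angle of the triangular structure below $\pi/3$. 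This is a purely local, fixed-$L$ variational argument and needs neither convexity of $J$ nor any claim about the order in which anchor configurations appear. If you want to salvage your outline, replace the continuity-plus-convexity step with this reallocation argument between two vertices of differing angles.
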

\begin{proof}
For three anchors, we have shown they must assume an equilateral triangle 
configuration. Suppose that in a two-anchor network configuration, the optimal 
network has internal angles at non-terminals anchors larger than $\pi/3$. For example, 
suppose that in Fig.~\ref{fig:two-anchors},
$\angle AB'C' = \angle AC'B' > \pi/3$. This requires that $\angle B'AC' < 
\pi/3$. Now, suppose we push down the triangle $A'B'C'$ along $AA'$ by a small
$\delta > 0$ and retract along $B'B$ and $C'C$ so that $L$ remains unchanged. 
Because $0 > \frac{dJ}{dL}|_{A'} > \frac{dJ}{dL}|_{B'} = \frac{dJ}{dL}|_{C'}$, 
this means that $J$ will actually decrease due to the change. Therefore, the 
configuration cannot be optimal. 

The same argument also applies to the single anchor case: if the internal angle
at the single anchor is larger than $\pi/3$, the at least one of the two other 
internal angles must be smaller than $\pi/3$. 
\end{proof}

We are now ready to establish the evolution pathway of the optimal Steiner 
network for three terminals with Steiner points. 

\begin{theorem}[Network Evolution, with Steiner Point]
For three terminals $A, B$, and $C$ with a Steiner point $O$, as the 
budget $L > L_{\st}$ increases, the optimal Steiner network $N(L)$ will first 
grow an equilateral triangle, $\triangle A'B'C'$, out of $O$ toward the three 
terminals. The internal angles of $\triangle A'B'C'$ are bisected by $AA', BB'$
and $CC'$. 
The growth continues until one of the anchors, say $A'$, reaches terminal $A$, corresponding to the largest internal angle of $\triangle ABC$. Then, an 
isosceles triangle continuous to grow in place of the equilateral triangle, 
with its two internal angles $\angle AB'C'$ and $AC'B'$ bisected by $BB'$ and 
$CC'$, respectively , until one of the two anchors $B'$ reaches 
a second terminal, say $B$, that corresponds to the second largest angle 
of $\triangle ABC$. Finally, the network grows as $C'$ finally reaches $C$, 
with $CC'$ always bisecting $\angle AC'B$. 
\end{theorem}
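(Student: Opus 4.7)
The plan is to track the optimal network $N(L)$ as a continuous one-parameter family and stitch together the three phases---three, two, and one anchor---using the structural results already established. First I would argue that $N(L)$ depends continuously on $L$: the objective $J$ and the length constraint are continuous in the anchor coordinates, so the optimal configuration varies continuously except at transition values of $L$ where the anchor topology changes (two nodes merging). This rules out jumps in the anchor count apart from merges.

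For the three-anchor phase I would invoke Lemma~\ref{l:anchor-number-3-isp} and Theorem~\ref{t:equilateral}: the three anchors form an equilateral triangle $\triangle A'B'C'$ centered at the Steiner point $O$ with vertices on rays $OA$, $OB$, $OC$. Parametrizing by the circumradius $r$, a short calculation gives
\[
L - L_{\st} \;=\; 3r(\sqrt{3}-1),
\]
so $r$ is a strictly increasing function of $L$. Feasibility requires $r \le \min\{|OA|,|OB|,|OC|\}$, and the phase ends when this minimum is first attained; the corresponding anchor then coincides with its terminal. The geometric fact needed here is that this minimum is attained at the terminal with the \emph{largest} internal angle of $\triangle ABC$. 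This would follow from the law of sines applied to the three sub-triangles $OAB$, $OBC$, $OCA$ (each having a $2\pi/3$ angle at $O$): $|OX|$ is strictly decreasing in the angle at $X$, with $|OX|\to 0$ as that angle tends to $2\pi/3$.

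For the two-anchor phase, by Proposition~\ref{l:s2a} the remaining anchors $B',C'$ satisfy $|AB'|=|AC'|$, with $BB'$ and $CC'$ bisecting $\angle AB'C'$ and $\angle AC'B'$ respectively. The isosceles triangle $\triangle AB'C'$ is then determined by a single scalar (its ``size''), which I would show grows strictly monotonically with $L$. To identify which of $B'$ or $C'$ merges with its terminal first, I would combine the rate-of-change formula~\eqref{eq:rate-of-change} with Lemma~\ref{l:feasible} to compare the speeds at which each anchor approaches its target; the conclusion is that the anchor opposite the larger remaining angle (the second-largest of $\alpha,\beta,\gamma$) reaches its terminal first. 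Once the first two merges have occurred, the single remaining anchor $C'$ is connected to $A$, $B$, $C$, the segment $AB$ appears directly in the network, and by Lemma~\ref{l:anchor-bisector} the segment $CC'$ bisects $\angle AC'B$; as $L$ continues to grow, $C'$ slides along this bisector until it coincides with $C$.

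The main obstacle, as I see it, is the geometric identification of the merging order by vertex angle. The three-anchor-phase claim is classical but still needs a clean law-of-sines argument; the two-anchor-phase claim is subtler because the axis of symmetry of $\triangle AB'C'$ is in general not the line $AO$---it is determined jointly by the positions of $B,C$ and the bisector constraints, so comparing the two approach speeds requires a careful local analysis. A secondary subtlety is ruling out that the anchor count ever increases again; this I would dispatch via continuity together with the observation that, at each transition value of $L$, reverting to an earlier phase would force one of the ``tail'' segments to have negative length in the corresponding budget formula.
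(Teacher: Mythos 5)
Your three-anchor bookkeeping is sound: the formula $L - L_{\st} = 3r(\sqrt{3}-1)$ is correct, and your law-of-sines argument that the Fermat/Steiner point is closest to the terminal with the largest angle actually supplies a detail the paper only asserts parenthetically ("$AA'$ is shorter than $BB'$ and $CC'$"). The genuine gap is in how you exclude discrete jumps. You dispose of them by asserting that the optimal configuration varies continuously in $L$ except at merge events, but continuity of the minimizer of a parametric optimization problem is precisely the point in question: the globally optimal network can change discontinuously when two distinct locally optimal branches (say, the equilateral three-anchor configuration and an isosceles two-anchor configuration with apex at $A$) exchange global optimality at some budget strictly before the equilateral triangle reaches $A$. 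Nothing in your sketch rules this scenario out, and your negative-tail-length observation only prevents reverting to an earlier phase, not jumping ahead to a later one. The paper closes exactly this hole with Lemma~\ref{l:feasible}: if $A'$ merged with $A$ prematurely, the available budget would force the resulting isosceles triangle to have apex angle less than $\pi/3$, hence base angles at the two non-terminal anchors exceeding $\pi/3$, contradicting the feasibility bound; the same device handles the three-to-one and two-to-one jumps. You do cite Lemma~\ref{l:feasible}, but only to compare approach speeds in the two-anchor phase, not for the jump exclusion, which is where it is actually needed; without it (or a rigorous substitute for the continuity claim) the argument is incomplete.

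A secondary point: you candidly flag the identification of which of $B'$, $C'$ merges first as an open obstacle, noting that the symmetry axis of $\triangle AB'C'$ need not pass through $O$. That honesty is warranted—but be aware the paper's own proof is also terse there ("using the same approach\ldots"), so this is a shared loose end rather than a divergence; the essential missing ingredient in your write-up remains the feasibility-based exclusion of premature anchor merges.
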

\begin{proof}
Without loss of generality, assume that $\angle BAC \ge \angle ABC \ge \angle ACB$. 
By Lemma~\ref{l:anchor-number-3-isp} and Theorem~\ref{t:equilateral}, the 
initial optimal network when $L = L_{\st} + \varepsilon$ has an equilateral 
triangle $A'B'C'$ growing out of the Steiner point $O$, with $AA'$, $BB'$, 
and $CC'$ bisecting $\angle B'A'C'$, $\angle A'B'C'$, and $\angle A'C'B'$, 
respectively. 
By Lemma~\ref{l:feasible}, before $\triangle A'B'C'$ reaches $A$ as an 
equilateral triangle ($AA'$ is shorter than than $BB'$ and $CC'$ when 
$\angle BAC$ is the largest angle of $\triangle ABC$), it cannot happen that 
the optimal network jumps to a configuration where one anchor disappears. To 
see that this is the case, suppose the network jumps to a configuration where 
$A'$ merges with $A$. This would force $\triangle A'B'C'$ to have 
$\angle B'A'C' < \pi/ 3 < \angle A'B'C' = \angle A'C'B'$, which is not 
possible. The situation gets worse if $B'$ merges with $B$ or $C'$ merges with 
$C$. 
Using a similar argument, we can show that it is also not possible for 
the optimal network to jump from three anchors to having a single anchor 
without the equilateral $\triangle A'B'C'$ reaching its maximum girth. 
Using the same approach, we can also show that it is not possible to ``jump''
from a two-anchor configuration to a single anchor configuration without 
the anchor $B'$ reaching $B$, as the isosceles triangle expands. 
\end{proof}

\subsection{No Steiner Point}
When an angle of $\triangle ABC$, say $\angle BAC$, is larger than $2\pi/3$, 
$A$ acts as a ``Steiner'' point. In this case, it becomes impossible for the 
optimal network $N(L)$ to have three internal anchors. 

\begin{lemma}[Anchor Multiplicity]\label{l:am}
For three terminals without a Steiner point, the optimal network 
$N(L)$ for any $L$ cannot have three anchors. 
\end{lemma}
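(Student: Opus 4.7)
My plan is to argue by contradiction: if three anchors existed, the machinery already developed would force the creation of a Fermat--Torricelli point inside $\triangle ABC$, which is impossible precisely when $\triangle ABC$ has an angle of at least $2\pi/3$. Concretely, suppose $N(L)$ has three anchors $A', B', C'$. By Lemma~\ref{l:anchor-int} each anchor lies strictly inside $\triangle ABC$, by Lemma~\ref{l:anchor-degree} each has degree $3$, and the degree-counting argument from the last paragraph of the proof of Lemma~\ref{l:anchor-number-3-isp} (which only needs three terminals and three degree-$3$ anchors, not a Steiner point) forces the six-segment layout in which $A', B', C'$ form an inner triangle, each joined by a single segment to a distinct terminal.

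By Lemma~\ref{l:anchor-bisector}, each of $AA'$, $BB'$, $CC'$ bisects the interior angle of $\triangle A'B'C'$ at its anchor endpoint. Since the three angle bisectors of any triangle concur at its incenter $O'$, the segments $AA'$, $BB'$, $CC'$ all pass through a common point $O'$ inside $\triangle A'B'C'$. I would then re-run the perturbation argument of Theorem~\ref{t:equilateral}: translate $\triangle A'B'C'$ rigidly perpendicular to $CC'$, match the first-order changes of $L$ and $J$, and deduce that $CC'$ bisects $\angle AO'B$; do the analogous perturbation along the other two directions. This forces $\triangle A'B'C'$ to be equilateral and gives $\angle A'O'B' = \angle B'O'C' = \angle A'O'C' = 2\pi/3$.

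Since the anchors are interior to $\triangle ABC$ and $O'$ is interior to $\triangle A'B'C'$, on each of the three lines the points appear in the order terminal--anchor--$O'$, so ray $O'A$ coincides with ray $O'A'$ and likewise for the other two. Therefore $\angle AO'B = \angle BO'C = \angle AO'C = 2\pi/3$, meaning $O'$ is an interior point of $\triangle ABC$ from which every pair of terminals subtends $2\pi/3$---precisely the defining property of the Fermat--Torricelli (Steiner) point. Such an interior point exists if and only if every angle of $\triangle ABC$ is strictly less than $2\pi/3$, contradicting the no-Steiner-point hypothesis. The main obstacle is verifying that the perturbation step of Theorem~\ref{t:equilateral} is genuinely local at the three anchors and does not silently use the existence of a Steiner point of $\triangle ABC$; my expectation is that it does not, since the first-order length changes of $|AA'|, |BB'|, |CC'|$ and of the sides of $\triangle A'B'C'$ under a rigid shift of the anchor triangle are well defined regardless of whether the Fermat point lies inside $\triangle ABC$.
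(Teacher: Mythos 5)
Your proposal is correct and takes essentially the same route as the paper, which simply invokes Theorem~\ref{t:equilateral} to conclude that three anchors would force an equilateral Steiner triangle whose center subtends $2\pi/3$ over each pair of terminals, impossible when some angle (e.g., $\angle BAC$) is at least $2\pi/3$. Your additional care in re-running the perturbation argument locally at the anchors, rather than citing the theorem whose statement nominally assumes a Steiner point and the budget $L_{\st}+\varepsilon$, is a reasonable tightening but does not change the substance of the argument.
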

\begin{proof}
If there are three anchors, Theorem~\ref{t:equilateral} must hold. However, 
this is impossible if one of the angles formed by the terminals is equal to or 
larger than $2\pi/3$. Referring to Fig.~\ref{fig:three-anchors}, suppose that
$\angle BAC \ge 2\pi/3$. However, also by Theorem~\ref{t:equilateral}, 
$\angle BOC = 2\pi/3$, which is not possible. 
\end{proof}

Following similar reasoning used for establishing the case where the Steiner 
point is in the interior of $\triangle ABC$, the evolution of the optimal network 
for the current setting goes through the following phases (assuming terminals
 $A$, $B$, and $C$, and $\angle BAC \ge 2\pi/3$): 
\begin{enumerate}
\item The budget $L$ is sufficient to cover the shortest edge of 
$\triangle ABC$ but less than $L_{\st}$. In this case, $N(L)$ contains one edge
of $\triangle ABC$
\item The budget $L$ equal to $L_{\st}$. In this case, $N(L)$ is the Steiner 
tree comprised of $AB$ and $AC$.
\item For $L = L_{\st} + \varepsilon$ for small positive $\varepsilon$, a small 
isosceles triangle grows out from $A$, producing a configuration as shown in 
Fig.~\ref{fig:spectrum-2}(a). The network satisfies the bisector requirement 
given by Lemma~\ref{l:s2a}. As $L$ increases, the isosceles triangle expands 
with the bisector structure in place, until one of the vertex of the triangle
hits a terminal ($B$). 
\item As one of the two anchors merge with a terminal, the other anchor will 
continue to march toward the last terminal ($C$) as $L$ increases, eventually 
merge with that terminal. A snapshot of this process is given in 
Fig.~\ref{fig:spectrum-2}(b). 
\end{enumerate}

\begin{figure}[h]
\begin{center}
\begin{overpic}[width=3.8in,tics=5]
{./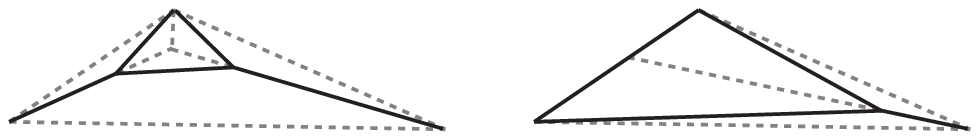}
\put(0, 0){{\small $B$}}
\put(18,14){{\small $A$}}
\put(47,0 ){{\small $C$}}
\put(22,-4){{\small (a)}}
\put(72,-4){{\small (b)}}
\end{overpic}
\end{center}
\caption{A spectrum of optimal Euclidean \bsn network structures (solid lines) 
for three terminals in a typical setup where $\angle BAC \ge 2\pi/3$, as the allowed budget increases.}
\label{fig:spectrum-2}
\end{figure} 


\section{Conclusion and Discussions}\label{sec:conclusion}
In this work, we propose the \emph{budgeted Steiner network} (\bsn) 
problem to study shortest path structures among multiple 
terminals under a path length budget.
We establish the precise evolution of the \bsn structure for 
three arbitrarily located terminals where paths between 
each pair of terminals have equal importance. 
It is clear that the characterization yields efficient 
algorithms for computing optimal \bsn structures for 
any given $3$-terminal setup and length budget. 

We mention that, beside potential real-world applications, 
\bsn structures also appear in natural processes. For example,
when a cell dies among a groups of cells in an organism, 
the perishing cell gradually gets absorbed by surrounding 
tissues. The process closely mimics the inverse of the emergence
of the Steiner triangle for three anchors, known as the 
$T_2$ process \cite{weaire1984soap}. 

The current work just begins to scratch the surface of the study 
of \bsn; we mention a few interesting directions for future study:
(1) With the equal path weight case solved for three terminals, 
it seems possible to extend the analytical techniques developed in 
this study to work for the case where some paths connecting the 
terminals are more important than others; 
(2) It would also be interesting to characterize \bsn 
structures for four or more terminals, in which the weight 
constraints will cause even the starting structure to differ from 
Steiner trees for the same number of terminals;
(3) It is interesting to explore how \bsn structures are affected 
by obstacles that fall in the convex hull of the terminals. In this 
case, discontinuities in the evolution of the network can be 
unavoidable; and 
(4) As an alternative to analytical approaches, it is interesting 
to explore approximation algorithms as well as numerical methods 
for computing optimal \bsn structures. Numerical methods appear 
promising when the number of terminals are limited, which means that
the number of anchors and the possible \bsn structures are also 
limited, allowing enumeration in searching for 
the optimal one. 

%

\bibliographystyle{plain}
\bibliography{./bib/arc,./bib/others}

\begin{thebibliography}{10}

\bibitem{geoSteiner}
{\em GeoSteiner:Software for Computing Steiner Trees}, 2017 (accessed Aug 31,
  2017).

\bibitem{Aro98}
S.~Arora.
\newblock Polynomial-time approximation schemes for {E}uclidean {TSP} and other
  geometric problems.
\newblock {\em Journal of the ACM}, 45(5):753--782, 1998.

\bibitem{brazil2014history}
Marcus Brazil, Ronald~L Graham, Doreen~A Thomas, and Martin Zachariasen.
\newblock On the history of the euclidean steiner tree problem.
\newblock {\em Archive for history of exact sciences}, 68(3):327--354, 2014.

\bibitem{clark1981communication}
RC~Clark.
\newblock Communication networks, soap films and vectors.
\newblock {\em Physics Education}, 16(1):32, 1981.

\bibitem{cook1971complexity}
Stephen~A Cook.
\newblock The complexity of theorem-proving procedures.
\newblock In {\em Proceedings of the third annual ACM symposium on Theory of
  computing}, pages 151--158. ACM, 1971.

\bibitem{gilbert1968steiner}
EN~Gilbert and HO~Pollak.
\newblock Steiner minimal trees.
\newblock {\em SIAM Journal on Applied Mathematics}, 16(1):1--29, 1968.

\bibitem{hauptmann2013compendium}
Mathias Hauptmann and Marek Karpi{\'n}ski.
\newblock {\em A compendium on Steiner tree problems}.
\newblock Inst. f{\"u}r Informatik, 2013.

\bibitem{hwang1992steiner}
Frank~K Hwang and Dana~S Richards.
\newblock Steiner tree problems.
\newblock {\em Networks}, 22(1):55--89, 1992.

\bibitem{jarnik1934minimalnich}
Vojt{\v{e}}ch Jarn{\'\i}k and Milo{\v{s}} K{\"o}ssler.
\newblock O minim{\'a}ln{\'\i}ch grafech, obsahuj{\'\i}c{\'\i}ch $ n $
  dan{\'y}ch bod{\r{u}}.
\newblock {\em {\v{C}}asopis pro p{\v{e}}stov{\'a}n{\'\i} matematiky a fysiky},
  63(8):223--235, 1934.

\bibitem{karp1972reducibility}
Richard~M Karp.
\newblock Reducibility among combinatorial problems.
\newblock In {\em Complexity of computer computations}, pages 85--103.
  Springer, 1972.

\bibitem{korte2001vojtvech}
Bernhard Korte and Jaroslav Ne{\v{s}}et{\v{r}}il.
\newblock Vojt{\v{e}}ch jarn{\'\i}k's work in combinatorial optimization.
\newblock {\em Discrete Mathematics}, 235(1-3):1--17, 2001.

\bibitem{melzak1961problem}
Zdzislaw~Alexander Melzak.
\newblock On the problem of steiner.
\newblock {\em Canad. Math. Bull}, 4(2):143--148, 1961.

\bibitem{miehle1958link}
William Miehle.
\newblock Link-length minimization in networks.
\newblock {\em Operations research}, 6(2):232--243, 1958.

\bibitem{warme2000exact}
David~M Warme, Pawel Winter, and Martin Zachariasen.
\newblock Exact algorithms for plane steiner tree problems: A computational
  study.
\newblock In {\em Advances in Steiner trees}, pages 81--116. Springer, 2000.

\bibitem{weaire1984soap}
Denis Weaire and Nicolas Rivier.
\newblock Soap, cells and statistics—random patterns in two dimensions.
\newblock {\em Contemporary Physics}, 25(1):59--99, 1984.

\bibitem{winter1987steiner}
Pawel Winter.
\newblock Steiner problem in networks: a survey.
\newblock {\em Networks}, 17(2):129--167, 1987.

\bibitem{winter1997euclidean}
Pawel Winter and Martin Zachariasen.
\newblock Euclidean steiner minimum trees: An improved exact algorithm.
\newblock {\em Networks}, 30(3):149--166, 1997.

\end{thebibliography}

\end{document}